\documentclass{llncs}

\usepackage[intlimits]{amsmath}
\usepackage{amsfonts,amssymb}

\usepackage{amsthm}
\usepackage{enumerate}
\usepackage[british]{babel}

\newtheorem{fact}{Fact}

\makeatletter
\newtheorem*{rep@theorem}{\rep@title}
\newcommand{\newreptheorem}[2]{%
\newenvironment{rep#1}[1]{%
 \def\rep@title{#2 \ref{##1}}%
 \begin{rep@theorem}}%
 {\end{rep@theorem}}}
\makeatother
\newreptheorem{theorem}{Theorem}
\newreptheorem{lemma}{Lemma}
\newreptheorem{corollary}{Corollary}

\hyphenation{sen-si-ti-vi-ty}

\begin{document}

\title{Sensitivity versus Certificate Complexity of Boolean Functions \thanks{
The research leading to these results has received funding from the European Union Seventh Framework Programme (FP7/2007-2013) under projects RAQUEL (Grant Agreement No. 323970)
and ERC Advanced Grant MQC.}}
\author{Andris Ambainis \and Kri\v{s}j\={a}nis Pr\={u}sis \and Jevg\={e}nijs Vihrovs}
\institute{Faculty of Computing, University of Latvia, Rai\c{n}a bulv. 19, R\=\i ga, LV-1586, Latvia}

\date{}

\maketitle

\begin{abstract}
Sensitivity, block sensitivity and certificate complexity are basic complexity measures of Boolean functions. The famous sensitivity conjecture claims that sensitivity is polynomially related to block sensitivity. However, it has been notoriously hard to obtain even exponential bounds. Since block sensitivity is known to be polynomially related to certificate complexity, an equivalent of proving this conjecture would be showing that the certificate complexity is polynomially related to sensitivity. Previously, it has been shown that $bs(f) \leq C(f) \leq 2^{s(f)-1} s(f) - (s(f)-1)$. In this work, we give a better upper bound of $bs(f) \leq C(f) \leq \max\left(2^{s(f)-1}\left(s(f)-\frac 1 3\right), s(f)\right)$ using a recent theorem limiting the structure of function graphs. We also examine relations between these measures for functions with 1-sensitivity $s_1(f)=2$ and arbitrary 0-sensitivity $s_0(f)$.
\end{abstract}

\section{Introduction}

\emph{Sensitivity} and \emph{block sensitivity} are two well-known 
combinatorial complexity measures of Boolean functions. 
The sensitivity of a Boolean function, $s(f)$, is just the maximum number of
variables $x_i$ in an input assignment $x=(x_1, \ldots, x_n)$ with the property that 
changing $x_i$ changes the value of $f$. 
Block sensitivity, $bs(f)$, is a generalization of sensitivity to the case
when we are allowed to change disjoint blocks of variables.

Sensitivity and block sensitivity are related to the complexity of
computing $f$ in several different computational models, from 
parallel random access machines or PRAMs \cite{Cook_Dwork_Reischuk_1986} 
to decision tree complexity, where block sensitivity has been useful for
showing the complexities of deterministic, probabilistic and quantum decision trees
are all polynomially related \cite{Nisan_1991,Beals+_2001,Buhrman_deWolf_2002}.

A very well-known open problem is the \emph{sensitivity vs. block sensitivity conjecture}
which claims that the two quantities are polynomially related.
This problem is very simple to formulate (so simple that it can be assigned as 
an undergraduate research project). At the same time, the conjecture 
appears quite difficult to solve. It has been known for over 25 years and
the best upper and lower bounds are still very far apart.
We know that block sensitivity can be quadratically larger than sensitivity
\cite{Rubinstein_1995,Virza_2011,Ambainis_Sun_2011} 
but the best upper bounds on block sensitivity in terms of sensitivity are still exponential (of the form
$bs(f)\leq c^{s(f)}$) \cite{Simon_1983,Kenyon_Kutin_2004,A+}.

Block sensitivity is polynomially related to a number of other complexity measures
of Boolean functions: \emph{certificate complexity}, \emph{polynomial degree} and the
number of queries to compute $f$ either deterministically, probabilistically or 
quantumly \cite{Buhrman_deWolf_2002}. 
This gives a number of equivalent formulations for the sensitivity vs. block 
sensitivity conjecture: it is equivalent to asking whether sensitivity is polynomially
related to any one of these complexity measures.

Which of those equivalent forms of the conjecture is the most promising one?
We think that certificate complexity, $C(f)$, is the combinatorially simplest among all 
of these complexity measures. Certificate complexity being at least $c$ 
simply means that there is an input $x=(x_1, \ldots, x_n)$ that is not contained in 
an $(n-(c-1))$-dimensional subcube of the Boolean hypercube on which $f$ is constant.
Therefore, we now focus on the ``sensitivity vs. certificate complexity" form of the conjecture.

\subsection{Prior Work}

The best upper bound on certificate complexity in terms of sensitivity is
\begin{equation} \label{eq:previous}
C_0(f) \leq 2^{s_1(f)-1} s_0(f) - (s_1(f)-1)
\end{equation} due to Ambainis et al. \cite{A+}\footnote{Here,
$C_0$ ($C_1$) and $s_0$ ($s_1$) stand for certificate complexity and sensitivity, restricted to
inputs $x$ with $f(x)=0$ ($f(x)=1$).} The bounds for $C_0(f)$ also hold for $C_1(f)$ symmetrically (in this case, $C_1(f) \leq 2^{s_0(f)-1} s_1(f) - (s_0(f)-1)$), so it is sufficient to focus on the former.

Since $bs(f)  \leq C(f)$, this immediately implies that
\begin{equation}
bs(f) \leq C(f) \leq 2^{s(f)-1} s(f) - (s(f)-1).
\end{equation}

\subsection{Our Results}

In this work, we give improved upper bounds for the ``sensitivity vs. certificate complexity'' problem. Our main technical result is

\begin{theorem} \label{thm:b-cert-sens}
Let $f$ be a Boolean function which is not constant. If $s_1(f) = 1$, then $C_0(f) = s_0(f)$. If $s_1(f) > 1$, then
\begin{equation}
C_0(f) \leq 2^{s_1(f)-1}\left(s_0(f)-\frac 1 3\right).
\end{equation}
\end{theorem}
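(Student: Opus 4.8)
The plan is to induct on $s_1(f)$: in each step I build a small certificate for a fixed $0$-input by the usual block-by-block ``region growing'', but squeeze an extra constant factor out of the local structure of $f$ near that input.

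\textbf{Base case $s_1(f)=1$.} Given a $0$-input $x$, let $S$ be its set of sensitive coordinates, so $|S|\le s_0(f)$. I claim $S$ is already a certificate for $x$. If not, among all $1$-inputs agreeing with $x$ on $S$ pick one, $y$, at minimum Hamming distance from $x$. Every coordinate on which $x$ and $y$ differ lies outside $S$, hence is insensitive for $x$; if $d(x,y)=1$ that coordinate would be sensitive for $x$, a contradiction, so $d(x,y)\ge 2$. For any coordinate $j$ with $x_j\ne y_j$, the input obtained from $y$ by flipping coordinate $j$ is still consistent with $x$ on $S$ and strictly closer to $x$, so by minimality it has value $0$; thus $j$ is sensitive for $y$. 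Hence $s(y)\ge d(x,y)\ge 2>s_1(f)$, impossible. So $C_0(f)\le s_0(f)$, and since any certificate of $x$ must contain $S$ we get $C_0(f)=s_0(f)$.

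\textbf{Inductive step $s_1(f)>1$.} Fix a $0$-input $x$. Maintain a set $R$ of already-fixed coordinates, initialised to the sensitive coordinates of $x$; while $R$ is not a certificate, take a $1$-input $y$ consistent with $x$ on $R$ at minimum distance, set $B=\{i:x_i\ne y_i\}$, and add $B$ to $R$. Minimality makes $B$ a \emph{minimal} sensitive block of $x$, so (exactly as in the base case) $|B|\le s_1(f)$ and every coordinate of $B$ is sensitive for $y$; one then reorganises the process as a recursion on $s_1$ in which each level peels off one unit of $s_1(f)$ at the cost of at most doubling the accumulated certificate size, which unrolled produces the factor $2^{s_1(f)-1}$. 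The improvement of $s_0(f)$ to $s_0(f)-\tfrac13$ comes from feeding the recent structural theorem on function graphs into this recursion: it restricts which families of minimal sensitive blocks and their flipped $1$-inputs can coexist around $x$, leaving essentially three configurations, and an amortised count over them — morally, bounding $3\,C_0(f)$ rather than $C_0(f)$, which is why the saving is a multiple of $\tfrac13$ and survives every doubling step — replaces the additive losses $-(s_1(f)-1)$ of the previous proof by the multiplicative $-\tfrac13$.

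The delicate part, and the step I expect to be the main obstacle, is exactly this last one: arranging the region-growing recursion so that the structural theorem is applicable at every level, checking that the three-case amortisation composes with the doubling without integer rounding eroding the gain, and handling the degenerate situations (small $s_0(f)$, or $R$ already a certificate after absorbing the first block) — for the latter one falls back on the symmetric bound obtained by exchanging the roles of $0$ and $1$.
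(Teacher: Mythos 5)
Your base case is correct and complete: the argument that a minimal-distance $1$-input $y$ consistent with $x$ on its sensitive set would have $s(f,y)\ge d(x,y)\ge 2$ is a clean, self-contained proof of $C_0(f)=s_0(f)$ when $s_1(f)=1$ (the paper instead derives this case from the subcube structure of the neighbour cubes of a minimal certificate, but your route is equally valid).

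The inductive step, however, is not a proof — it is a description of a strategy whose two essential components are both missing. First, the recursion itself is never defined: you assert that "each level peels off one unit of $s_1(f)$ at the cost of at most doubling the accumulated certificate size," but you do not say what function or restriction the next level is applied to, nor why its $1$-sensitivity drops by one. (The mechanism that actually makes $s_1$ drop is restriction to a neighbour subcube of a $0$-certificate: a $1$-input there spends one sensitive bit on the certificate cube, so its sensitivity inside the subcube is at most $s_1(f)-1$. Your region-growing process around $x$ does not obviously have this property, and the known $2^{s_1-1}s_0$ bound is not proved by block-growing.) Second, and more importantly, the entire content of the $-\frac13$ improvement is replaced by the phrase "essentially three configurations" and an unspecified "amortised count." In the actual argument the $\frac13$ comes from the gap theorem's dichotomy — each neighbour cube $S_i$ of a minimal certificate $S_0$ either meets $f^{-1}(1)$ in an exact subcube of relative size $2^{-(s_1-1)}$ ("light") or in a set of relative size at least $\frac{3}{2}\cdot 2^{-(s_1-1)}$ ("heavy") — fed into an averaging bound $\sum_i R(G,S_i)\le s_0(f)$ over the vertices of $S_0$. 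When too many cubes are light this averaging alone is insufficient, and one needs a genuinely new combinatorial ingredient: the intersections $H_i$ of the light cubes' preimages in $S_0$ are pairwise at Hamming distance at least $2$, and a union of such subcubes has a boundary at least as large as itself, which forces a positive fraction of $S_0$ to have sensitivity strictly below $s_0(f)$ and restores the bound. Nothing in your proposal corresponds to this lemma or to the case split it resolves, so the main claim of the theorem remains unproved. You also correctly sense the danger yourself; but "three cases, hence a factor of $\frac13$" is numerology rather than an argument, and composing an unspecified per-level saving through $s_1(f)-1$ doubling steps would in any case need an explicit invariant to avoid the loss degrading to the old additive $-(s_1(f)-1)$.
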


A similar bound for $C_1(f)$ follows by symmetry. This implies a new upper bound on block sensitivity and certificate complexity in terms of sensitivity:

\begin{corollary} \label{thm:cert-sens}
Let $f$ be a Boolean function. Then
\begin{equation}
bs(f) \leq C(f) \leq \max\left(2^{s(f)-1}\left(s(f)-\frac 1 3\right), s(f)\right).
\end{equation}
\end{corollary}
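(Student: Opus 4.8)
The plan is to reduce the corollary to Theorem~\ref{thm:b-cert-sens} together with two elementary facts: that $bs(f) \le C(f)$ always holds (the standard inequality, since a minimal certificate for an input $x$ must intersect every sensitive block of $x$), and that $C(f) = \max(C_0(f), C_1(f))$ and $s(f) = \max(s_0(f), s_1(f))$ by definition. So it remains to bound $C_0(f)$ and, symmetrically, $C_1(f)$, each by $\max\bigl(2^{s(f)-1}(s(f)-\tfrac13),\, s(f)\bigr)$.

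First I would dispose of the degenerate cases. If $f$ is constant then $C(f) = bs(f) = 0$, while the right-hand side is $\max(2^{-1}(-\tfrac13),0) = 0$, so the bound holds. If $f$ is non-constant, I would observe that $s_0(f) \ge 1$ and $s_1(f) \ge 1$: walking along a path in the Boolean cube from a $1$-input to a $0$-input, the value of $f$ flips across some edge, producing a sensitive input of each type. Hence when $s(f) = 1$ we have $s_0(f) = s_1(f) = 1$, and the first clause of Theorem~\ref{thm:b-cert-sens} (and its symmetric form for $C_1$) gives $C_0(f) = s_0(f) = 1$ and $C_1(f) = s_1(f) = 1$, so $C(f) = 1 = s(f)$, which matches the stated maximum since $2^{0}(1-\tfrac13) = \tfrac23 < 1$.

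For the main case $s(f) \ge 2$, I would first note that the maximum on the right-hand side is attained by the exponential term: $2^{s(f)-1}(s(f)-\tfrac13) \ge s(f)$ whenever $s(f) \ge 2$, which is immediate from $2^{s(f)-1} \ge 2$ and $s(f) - \tfrac13 \ge s(f)/2$. Then I would bound $C_0(f)$ by splitting on $s_1(f)$. If $s_1(f) = 1$, Theorem~\ref{thm:b-cert-sens} gives $C_0(f) = s_0(f) \le s(f) \le 2^{s(f)-1}(s(f)-\tfrac13)$. If $s_1(f) > 1$, it gives $C_0(f) \le 2^{s_1(f)-1}(s_0(f)-\tfrac13)$; since $1 \le s_0(f) \le s(f)$ and $2 \le s_1(f) \le s(f)$, and both the exponent base and the factor $s_0(f)-\tfrac13 > 0$ are monotone, this is at most $2^{s(f)-1}(s(f)-\tfrac13)$. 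The symmetric argument (applying Theorem~\ref{thm:b-cert-sens} to the $0/1$-swap of $f$) gives the same bound for $C_1(f)$, and taking the maximum of the two finishes the proof.

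The only real subtlety — and what I expect to need the most care — is lining the edge cases up exactly with the stated $\max$: the exponential term is strictly below $s(f)$ for $s(f) \in \{0,1\}$, so those values must be handled through the linear branch, which in turn relies on the easy but necessary observation that a non-constant $f$ satisfies $s_0(f), s_1(f) \ge 1$, so that Theorem~\ref{thm:b-cert-sens} is applicable (in particular in its $s_1(f) = 1$ form). Everything else is routine monotonicity bookkeeping.
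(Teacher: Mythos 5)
Your proposal is correct and follows essentially the same route as the paper: dispose of the constant case, apply Theorem~\ref{thm:b-cert-sens} (and its symmetric form) to bound $C_0(f)$ and $C_1(f)$, invoke the standard inequality $bs(f) \leq C(f)$, and finish by monotonicity in $s_0(f), s_1(f) \leq s(f)$. Your treatment of the edge cases ($s(f) \in \{0,1\}$, where the linear branch of the maximum is the binding one) is somewhat more explicit than the paper's, but the argument is the same.
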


On the other hand, the function of Ambainis and Sun \cite{Ambainis_Sun_2011} gives the separation of
\begin{equation}
C_0(f) = \left(\frac 2 3 + o(1)\right) s_0(f)s_1(f)
\end{equation}
for arbitrary values of $s_0(f)$ and $s_1(f)$. For $s_1(f) = 2$, we show an example of $f$ that achieves
\begin{equation}
C_0(f) = \left \lfloor \frac{3}{2} s_0(f) \right \rfloor = \left \lfloor \frac{3}{4} s_0(f)s_1(f) \right \rfloor. \label{eq:example}
\end{equation}

We also study the relation between $C_0(f)$ and $s_0(f)$ for functions with low $s_1(f)$, as we think these cases may provide insights into the more general case. 

If $s_1(f)=1$, then $C_0(f)=s_0(f)$ follows from (\ref{eq:previous}). So, the easiest non-trivial case is $s_1(f)=2$, for which (\ref{eq:previous})
becomes $C_0(f) \leq 2 s_0(f) - 1$. 

For $s_1(f) = 2$, we prove a slightly better upper bound of $C_0(f) \leq \frac 9 5 s_0(f)$. We also show that $C_0(f) \leq \frac 3 2 s_0(f)$  for $s_1(f) = 2$ and $s_0(f) \leq 6$  and thus our example (\ref{eq:example}) is optimal in this case. We conjecture that $C_0(f) \leq \frac 3 2 s_0(f)$ is a tight upper bound for $s_1(f) = 2$.


Our results rely on a recent ``gap theorem" by Ambainis and Vihrovs \cite{Ambainis_Vihrovs_2015}
which says that any sensitivity-$s$ induced subgraph $G$ of the Boolean hypercube must be
either of size $2^{n-s}$ or of size at least $\frac{3}{2} 2^{n-s}$ and, 
in the first case, $G$ can only be a subcube obtained by fixing $s$ variables.
Using this theorem allows refining earlier results which 
used Simon's lemma \cite{Simon_1983} -- any sensitivity-$s$ induced subgraph $G$ must be of size at least $2^{n-s}$ --
but did not use any more detailed information about the structure of such $G$.

We think that further research in this direction may uncover more interesting 
facts about the structure of low-sensitivity subsets of the Boolean
hypercube, with implications for the ``sensitivity vs. certificate complexity" conjecture.

\section{Preliminaries}

Let $f: \{0,1\}^n \rightarrow \{0,1\}$ be a Boolean function on $n$ variables. The $i$-th variable of input $x$ is denoted by $x_i$. For an index set $P \subseteq [n]$, let $x^P$ be the input obtained from an input $x$ by flipping every bit $x_i$, $i \in P$.

We briefly define the notions of sensitivity, block sensitivity and certificate complexity.
For more information on them and their relations to other
complexity measures (such as deterministic, probabilistic and quantum decision
tree complexities), we refer the reader to the surveys by Buhrman and de Wolf \cite{Buhrman_deWolf_2002}
and Hatami et al. \cite{Hatami_Kulkarni_Pankratov_2011}.

\begin{definition}
The \emph{sensitivity complexity} $s(f,x)$ of $f$ on an input $x$ is defined as $\left| \left\{ i \,\middle|\, f(x) \neq f\left(x^{\{i\}}\right)\right\} \right|$. The \emph{$b$-sensitivity} $s_b(f)$ of $f$, where $b \in \{0,1 \}$,  is defined as $\max (s(f,x) \mid x \in \{0,1\}^n, f(x)=b)$.  The \emph{sensitivity} $s(f)$ of $f$  is defined as $\max(s_0(f), s_1(f))$.
\end{definition}

\begin{definition}
The \emph{block sensitivity} $bs(f,x)$ of $f$ on input $x$ is defined as the maximum number $t$ such that there are $t$ pairwise disjoint subsets $B_1, \ldots , B_t$ of $[n]$ for which $f(x) \neq f\left(x^{B_i}\right)$. We call each $B_i$ a \emph{block}.  The \emph{$b$-block sensitivity} $bs_b(f)$ of $f$, where $b \in \{0,1 \}$,  is defined as $\max (bs(f,x) \mid x \in \{0,1\}^n, f(x)=b)$.  The \emph{block sensitivity} $bs(f)$ of $f$  is defined as $\max(bs_0(f), bs_1(f))$.
\end{definition}

\begin{definition}
A \emph{certificate} $c$ of $f$ on input $x$ is defined as a partial assignment $c: P \rightarrow \{0,1\}, P \subseteq [n]$ of $x$ such that $f$ is constant on this restriction. We call $|P|$ the \emph{length} of $c$. If $f$ is always 0 on this restriction, the certificate is a \emph{0-certificate}. If $f$ is always 1, the certificate is a \emph{1-certificate}.
\end{definition}

\begin{definition}
The \emph{certificate complexity} $C(f,x)$ of $f$ on input $x$ is defined as the minimum length of a certificate that $x$ satisfies. The \emph{$b$-certificate complexity} $C_b(f)$ of $f$, where $b \in \{0,1 \}$,  is defined as $\max (C(f,x) \mid x \in \{0,1\}^n, f(x)=b)$.  The \emph{certificate complexity} $C(f)$ of $f$  is defined as $\max (C_0(f), C_1(f))$.
\end{definition}

In this work we look at $\{0, 1\}^n$ as a set of vertices for a graph $Q_n$ (called the \emph{$n$-dimensional Boolean cube} or \emph{hypercube}) in which we have an edge $(x, y)$ whenever $x=(x_1, \ldots, x_n)$ and $y=(y_1, \ldots, y_n)$ differ in exactly one position. We look at subsets $S \subseteq \{0,1\}^n$ as subgraphs (induced by the subset of vertices $S$) in this graph.

\begin{definition}
Let $c$ be a partial assignment $c: P \rightarrow \{0,1\}, P \subseteq [n]$. An \emph{$(n-|P|)$-dimensional subcube} of $Q_n$ is a subgraph $G$ induced on a vertex set $\{ x \mid \forall i \in P \, (x_i = c(i)) \}$. It is isomorphic to $Q_{n-|P|}$. We call the value $\dim(G) = n-|P|$ the \emph{dimension} and the value $|P|$ the \emph{co-dimension} of $G$.
\end{definition}

Note that each certificate of length $l$ corresponds to one subcube of $Q_n$ with co-dimension $l$.

\begin{definition}
Let $G$ be a subcube defined by a partial assignment $c : P \rightarrow \{0,1\}, P \subseteq [n]$. Let $c' : P \rightarrow \{0,1\}$ where $c'(i) \neq c(i)$ for exactly one $i \in P$. Then we call the subcube defined by $c'$ a \emph{neighbour subcube} of $G$.
\end{definition}

\begin{definition}
Let $G$ and $H$ be induced subgraphs of $Q_n$. By $G \cap H$ denote the \emph{intersection} of $G$ and $H$ that is the graph induced on $V(G) \cap V(H)$. By $G \cup H$ denote the \emph{union} of $G$ and $H$ that is the graph induced on $V(G) \cup V(H)$. By $G \setminus H$ denote the \emph{complement} of $G$ in $H$ that is the graph induced by $V(G) \setminus V(H)$.
\end{definition}

\begin{definition}
Let $G$ and $H$ be induced subgraphs of $Q_n$. By $R(G, H)$ denote the relative size of $G$ in $H$:
\begin{equation}
R(G, H) = \frac{|G \cap H|}{|H|}.
\end{equation}
\end{definition}

We extend the notion of sensitivity to the induced subgraphs of $Q_n$:

\begin{definition}
Let $G$ be a non-empty induced subgraph of $Q_n$. The \emph{sensitivity} $s(G,Q_n,x)$ of a vertex $x \in Q_n$ is defined as $\left|\left\{ i \,\middle|\, x^{\{i\}} \notin G\right\}\right|$, if $x \in G$, and $\left|\left\{ i \,\middle|\, x^{\{i\}} \in G\right\}\right|$, if $x \notin G$. Then the \emph{sensitivity} of $G$ is defined as $s(G,Q_n) = \max(s(G,Q_n,x) \mid x \in G)$.
\end{definition}

Our results rely on the following generalization of Simon's lemma \cite{Simon_1983}, proved by Ambainis and Vihrovs \cite{Ambainis_Vihrovs_2015}:

\begin{theorem} \label{thm:simon-gen}
Let $G$ be a non-empty induced subgraph of $Q_n$ with sensitivity at most $s$. Then either $R(G, Q_n) = \frac{1}{2^s}$ and $G$ is an $(n-s)$-dimensional subcube or $R(G, Q_n) \geq \frac{3}{2}\cdot\frac{1}{2^s}$.
\end{theorem}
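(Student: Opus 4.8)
The plan is to establish the equivalent ``rigidity at the threshold'' form: \emph{if $G$ is a non-empty induced subgraph of $Q_n$ with $s(G,Q_n)\le s$ and $|G|<\frac32\cdot 2^{\,n-s}$, then $G$ is an $(n-s)$-dimensional subcube}. The bound $|G|\ge 2^{n-s}$ is Simon's lemma, which I take as known. I would induct on $n$ (the induction hypothesis being the whole statement for every $Q_m$ with $m<n$ and every value of $s$). The cases $n\le s$ are immediate, since then $\frac32\cdot 2^{\,n-s}\le\frac32$ forces $|G|=1$, a $0$-dimensional subcube, or the size hypothesis is vacuous; and $s=0$ forces $G=Q_n$ by connectedness.

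The inductive step begins with three reductions that sharpen the hypotheses. (i) If $G$ is disconnected, each connected component $G'$ still has sensitivity at most $s$ in $Q_n$ --- a neighbour of a vertex of $G'$ lying outside $G'$ lies outside $G$ --- so Simon's lemma gives $|G|\ge 2\cdot 2^{\,n-s}>\frac32\cdot 2^{\,n-s}$, against the hypothesis; hence $G$ is connected. (ii) If every vertex of $G$ has sensitivity at most $s-1$, Simon's lemma at level $s-1$ gives $|G|\ge 2^{\,n-s+1}$, again a contradiction; hence $s(G,Q_n)=s$ exactly. (iii) If some coordinate is constant on $G$, then inside the $(n-1)$-dimensional subcube containing $G$ the sensitivity of $G$ drops by exactly one, so there $G$ has sensitivity exactly $s-1$ and size below $\frac32\cdot 2^{\,(n-1)-(s-1)}$, and the induction hypothesis in $Q_{n-1}$ finishes. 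From now on $G$ is connected, uses every coordinate, and has a vertex of sensitivity exactly $s$.

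Now pick any coordinate $i$, split $Q_n$ into the two $(n-1)$-dimensional subcubes $A,B$ cut out by $x_i$, and set $G_A=G\cap A$, $G_B=G\cap B$, viewed in $Q_{n-1}$. By (iii) both are non-empty, each has sensitivity at most $s$ in $Q_{n-1}$, hence size at least $2^{\,n-1-s}$; as $|G_A|+|G_B|=|G|<3\cdot 2^{\,n-1-s}$, neither reaches $2^{\,n-s}$ and they cannot both reach $\frac32\cdot 2^{\,n-1-s}$. So the induction hypothesis applies to the smaller one, say $G_A$, which being below $\frac32\cdot 2^{\,(n-1)-s}$ must be a codimension-$s$ subcube of size $2^{\,n-1-s}$. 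Every vertex $y$ of such a subcube has sensitivity exactly $s$ in $Q_{n-1}$, so from $s(G,Q_n,(y,0))=s(G_A,Q_{n-1},y)+[\,(y,1)\notin G\,]\le s$ we get $(y,1)\in G$, i.e.\ $G_A\subseteq G_B$. If $G_A=G_B$ for this $i$ --- or for some other choice of splitting coordinate --- then $G$ equals $G_A$ with coordinate $i$ left free, an $(n-s)$-dimensional subcube, and we are done.

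The surviving possibility --- that for \emph{every} coordinate $i$ the split yields a proper containment $G_A\subsetneq G_B$ --- is the technical heart and the step I expect to be the main obstacle. Fixing one such split, $H:=G_B$ is a subgraph of $Q_{n-1}$ properly containing a codimension-$s$ subcube $C:=G_A$, with sensitivity at most $s$ everywhere and at most $s-1$ at every vertex outside $C$ (its $i$-mirror lies outside $G$, adding one sensitive direction in $Q_n$), and $|H|<2^{\,(n-1)-s+1}$; the aim is a contradiction. I would run a secondary induction on $|H\setminus C|$ and analyse how $H$ spreads over the $2^s-1$ subcubes parallel to $C$: every vertex of a parallel subcube $C'$ adjacent to $C$ has its flip back into $C\subseteq H$ as a non-sensitive direction and lies outside $C$, so $H\cap C'$ has sensitivity at most $s-1$ inside $C'$, and Simon's lemma inside $C'$ makes $H\cap C'$ large whenever non-empty; connectivity of $G$ then forces $H$ to meet a chain of parallel subcubes reaching away from $C$. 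The delicate point is quantitative: the extra vertices of $H$ can a priori be scattered thinly across many parallel subcubes, and controlling that dispersion --- by reapplying the very dichotomy being proved (``subcube, or a $\tfrac32$-factor jump in size'') one dimension and one sensitivity level lower inside each $C'$ --- is exactly where the content beyond Simon's lemma, and hence the $\tfrac32$ in the statement, comes from. (For $s=1$ a single parallel subcube already forces $|H|\ge 2^{\,n-1}$, contradicting $|H|<2^{\,n-1}$; the work is in pushing this through for larger $s$.)
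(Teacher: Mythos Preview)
The paper does not prove Theorem~\ref{thm:simon-gen} at all; it is quoted as a known result of Ambainis and Vihrovs \cite{Ambainis_Vihrovs_2015} and used as a black box throughout. So there is no ``paper's own proof'' to compare against, and your proposal cannot be judged on that basis.

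On its own merits, your reductions (i)--(iii) are clean and correct, and the splitting step is right: one half $G_A$ is a codimension-$s$ subcube of $Q_{n-1}$, its vertices are saturated, hence $G_A\subseteq G_B$, and equality for any single coordinate finishes the proof. The problem is the final case, which you yourself flag as ``the main obstacle'': you do not actually close it. Your proposed mechanism---Simon's lemma (or the inductive dichotomy) inside parallel subcubes $C'$ adjacent to $C$---only yields $|H\cap C'|\ge 2^{\,n-2s}$ (sensitivity of $H\cap C'$ inside $C'$ is at most $s-1$, and $\dim C'=n-1-s$), which is far too small to force $|H|\ge 2^{\,n-s}$ since there are $2^s-1$ parallel subcubes and $H\setminus C$ could in principle touch many of them thinly. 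The ``secondary induction on $|H\setminus C|$'' and ``chain of parallel subcubes'' remarks are suggestive but not an argument; in particular you have not explained what invariant the secondary induction maintains, nor why meeting one adjacent $C'$ forces meeting further ones in a way that accumulates enough mass.

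In short: the proposal is a plausible outline of the Ambainis--Vihrovs argument up to the decisive step, but the decisive step is missing, not merely abbreviated. If you want to complete it, you will need a sharper structural statement about $H$---for instance, showing that once $H$ properly contains $C$, the low-sensitivity constraint on $H\setminus C$ forces $H$ to contain a second full codimension-$s$ subcube disjoint from $C$, which immediately gives $|H|\ge 2\cdot 2^{\,n-1-s}$ and the desired contradiction.
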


\section{Upper Bound on Certificate Complexity in Terms of Sensitivity}

In this section we prove Corollary \ref{thm:cert-sens}. In fact, we prove a slightly more specific result.

\begin{reptheorem}{thm:b-cert-sens}
Let $f$ be a Boolean function which is not constant. If $s_1(f) = 1$, then $C_0(f) = s_0(f)$. If $s_1(f) > 1$, then
\begin{equation}
C_0(f) \leq 2^{s_1(f)-1}\left(s_0(f)-\frac 1 3\right).
\end{equation}
\end{reptheorem}

Note that a similar bound for $C_1(f)$ follows by symmetry. For the proof, we require the following lemma.

\begin{lemma} \label{thm:lemma}
Let $H_1$, $H_2$, $\ldots$, $H_k$ be distinct subcubes of $Q_n$ such that the Hamming distance between any two of them is at least 2. Take
\begin{equation}
T = \bigcup_{i=1}^k H_i, \quad\quad\quad T' = \left\{ x \,\middle|\, \exists i \,\left(x^{\{i\}} \in T\right) \right\} \setminus T.
\end{equation}
If $T \neq Q_n$, then $|T'| \geq |T|$.
\end{lemma}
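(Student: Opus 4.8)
The plan is to prove the lemma by induction on $n$. For $n=0$ the only possibility with $T\neq Q_n$ is $T=\emptyset$, which is trivial, so assume $n\geq 1$ and that the statement holds in $Q_{n-1}$. Split $Q_n$ along the last coordinate into the facets $A=\{x : x_n=0\}$ and $B=\{x : x_n=1\}$, each a copy of $Q_{n-1}$. Every $H_i$ is contained in $A$, contained in $B$, or has $x_n$ free, and in the last case $H_i\cap A$ and $H_i\cap B$ are subcubes of the respective facet. Thus $T_A:=T\cap A$ is the union of a family $\mathcal{A}$ of subcubes of $A$ --- those $H_i$ lying in $A$, together with the slices $H_i\cap A$ of the spanning ones --- and similarly $T_B:=T\cap B$ is a union of a family $\mathcal{B}$ inside $B$. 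First I would check that $\mathcal{A}$ and $\mathcal{B}$ again satisfy the hypotheses of the lemma in $Q_{n-1}$: two members of $\mathcal{A}$ lie at Hamming distance at least $2$ within $A$ (that distance is at least the distance in $Q_n$, which is at least $2$), and two members cannot coincide, since that would put two of the $H_i$ at distance $0$.

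Next I would verify that the boundary operation is compatible with the split. Let $T_A'$ denote the outer boundary of $T_A$ taken inside $A$, and $T_B'$ that of $T_B$ inside $B$. If $y\in A$, $y\notin T_A$, and $y$ has a neighbour in $T_A$, then $y\notin T$ and $y$ has a neighbour in $T$, so $y\in T'$; hence $T_A'\subseteq T'\cap A$, and likewise $T_B'\subseteq T'\cap B$. Since $T'\cap A$ and $T'\cap B$ are disjoint, $|T'|\geq |T_A'|+|T_B'|$. Therefore, as long as $T_A\neq A$ and $T_B\neq B$, applying the induction hypothesis in each facet gives $|T_A'|\geq |T_A|$ and $|T_B'|\geq |T_B|$, whence $|T'|\geq |T_A|+|T_B|=|T|$, which is exactly the claim.

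What remains --- and what I expect to be the main obstacle --- is the case where a whole facet lies in $T$, say $A\subseteq T$ (the case $B\subseteq T$ is symmetric, and the case $A,B\subseteq T$ is excluded, since then $T=Q_n$). Here the family $\mathcal{A}$ covers the connected graph $A$ with subcubes that are pairwise at distance at least $2$. Any edge of $A$ has both of its endpoints in a single member of $\mathcal{A}$ (otherwise two members would be at distance $1$), so by connectedness all of $A$ lies in one member, which must therefore equal $A$; and it is the unique member, since any other would meet it. That member is either some $H_i$ equal to $A$, or a slice $H_i\cap A$ of a spanning cube, which would force $H_i=Q_n$ --- a contradiction. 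So $H_i=A=\{x : x_n=0\}$ for some $i$; any other $H_j$ would have to avoid $A$ and all of its neighbours, but the neighbours of $A$ are exactly the vertices of $B$, so no other $H_j$ can exist. Hence $T=A$, which gives $T'=B$ and $|T'|=|B|=|A|=|T|$, completing the induction.

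Everything in the first two paragraphs is routine bookkeeping about how subcubes and outer boundaries behave under fixing one coordinate; the real content sits in the "$A\subseteq T$" case, where one uses that subcubes pairwise at distance at least $2$ cannot cover a connected region unless there is only one of them. It is also worth noting the equivalent reformulation of the hypothesis: the subcubes $H_1,\ldots,H_k$ being pairwise at distance at least $2$ says exactly that they are the connected components of $T$.
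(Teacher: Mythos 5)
Your proof is correct and follows essentially the same route as the paper's: induction on $n$ by splitting $Q_n$ into two facets, checking that the sliced subcubes still satisfy the distance hypothesis, and summing the two boundary estimates. The only (cosmetic) difference is that the paper peels off the $k=1$ case up front and starts the induction at $n=2$, whereas you absorb the degenerate ``a whole facet is covered'' situation into the induction by showing it forces $k=1$ and $T=A$.
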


\begin{proof}
If $k = 1$, then the co-dimension of $H_1$ is at least 1. Hence $H_1$ has a neighbour cube, so $|T'| \geq |T| = |H_1|$.

Assume $k \geq 2$. Then $n \geq 2$, since there must be at least 2 bit positions for cubes to differ in. We use an induction on $n$.

\textbf{Base case.} $n = 2$. Then we must have that $H_1$ and $H_2$ are two opposite vertices. Then the other two vertices are in $T'$, hence $|T'| = |T| = 2$.

\textbf{Inductive step.} Divide $Q_n$ into two adjacent $(n-1)$-dimensional subcubes $Q_0$ and $Q_1$ by the value of $x_1$. We will prove that the conditions of the lemma hold for each $T \cap Q_b$, $b \in \{0,1\}$. Let $H_u^b = H_u \cap Q_b$. Assume $H_u^b \neq \varnothing$ for some $u \in [k]$. Then either $x_1 = b$ or $x_1$ is not fixed in $H_u$. Thus, if there are two non-empty subcubes $H_u^b$ and $H_v^b$, they differ in the same bit positions as $H_u$ and $H_v$. Thus the Hamming distance between $H_u^b$ and $H_v^b$ is also at least 2. On the other hand, $Q_b \not\subseteq T$, since then $k$ would be at most 1. 

Let $T_b = T \cap Q_b$ and $T_b' = \left\{ x \,\middle|\, x \in Q_b, \exists i \, \left(x^{\{i\}} \in T_b\right) \right\} \setminus T_b$. Then by induction we have that $|T_b'| \geq |T_b|$. On the other hand, $T_0 \cup T_1 = T$ and $T_0' \cup T_1' \subseteq T'$. Thus
\begin{equation}
|T'| \geq |T_0'| + |T_1'| \geq |T_0| + |T_1| = |T|.
\end{equation}

\end{proof}

\begin{proof}[Proof of Theorem \ref{thm:b-cert-sens}]
Let $z$ be a vertex such that $f(z) = 0$ and $C(f, z) = C_0(f)$. Pick a 0-certificate $S_0$ of length $C_0(f)$ and $z \in S_0$. It has $m=C_0(f)$ neighbour subcubes which we denote by $S_1, S_2, \ldots, S_m$.

We work with a graph $G$ induced on a vertex set $\{x \mid f(x) = 1\}$. Since $S_0$ is a minimum certificate for $z$, $S_i \cap G \neq \varnothing$ for $i \in [m]$.

As $S_0$ is a 0-certificate, it gives 1 sensitive bit to each vertex in $G \cap S_i$. Then $s(G \cap S_i, S_i) \leq s_1(f)-1$.

Suppose $s_1(f) = 1$, then for each $i \in [m]$ we must have that $G \cap S_i$ equals to the whole $S_i$. But then each vertex in $S_0$ is sensitive to its neighbour in $G \cap S_i$, so $m \leq s_0(f)$. Hence $C_0(f) = s_0(f)$.

Otherwise $s_1(f) \geq 2$. By Theorem \ref{thm:simon-gen}, either $R(G,S_i) =\frac{1}{2^{s_1(f)-1}}$ or $R(G,S_i) \geq \frac{3}{2^{s_1(f)}}$ for each $i \in [m]$. We call the cube $S_i$ either \emph{light} or \emph{heavy} respectively. We denote the number of light cubes by $l$, then the number of heavy cubes is $m-l$. We can assume that the light cubes are $S_1, \ldots, S_l$.

Let the average sensitivity of $x \in S_0$ be $as(S_0)$. Since each vertex of $G$ in any $S_i$ gives sensitivity 1 to some vertex in $S_0$, $\sum_{i=1}^m R(G, S_i) \leq as(S_0)$. Clearly $as(S_0) \leq s_0(f)$. We have that 

\begin{align}
l \frac{1}{2^{s_1(f)-1}} + (m-l) \frac{3}{2^{s_1(f)}} &\leq as(S_0) \leq s_0(f) \\
m \frac{3}{2^{s_1(f)}} - l \frac{1}{2^{s_1(f)}} &\leq as(S_0) \leq s_0(f). \label{eq:ratio-vs-as}
\end{align}

Then we examine two possible cases.

\bigskip

\textbf{Case 1.} $l \leq (s_0(f)-1)2^{s_1(f)-1}$. Then we have

\begin{align}
m \frac{3}{2^{s_1(f)}} -(s_0(f)-1) \frac{2^{s_1(f)-1}}{2^{s_1(f)}} &\leq as(S_0) \leq s_0(f) \\
m \frac{3}{2^{s_1(f)}} &\leq s_0(f) + \frac 1 2 (s_0(f)-1) \\
m \frac{3}{2^{s_1(f)}} &\leq \frac 3 2 s_0(f) - \frac 1 2 \\
m  &\leq 2^{s_1(f)-1}\left(s_0(f)-\frac 1 3\right).
\end{align}

\bigskip

\textbf{Case 2.} $l = (s_0(f)-1)2^{s_1(f)-1}+\epsilon$ for some $\epsilon > 0$. Since $s_1(f) \geq 2$, the number of light cubes is at least $2(s_0(f)-1) + \epsilon$, which in turn is at least $s_0(f)$.

Let $\mathcal F = \{ F \mid F \subseteq [l], |F| = s_0(f)\}$. Denote its elements by $F_1, F_2, \ldots, F_{|\mathcal F|}$. We examine  $H_1, H_2, \ldots, H_{|\mathcal F|}$ -- subgraphs of $S_0$, where $H_i$ is the set of vertices whose neighbours in $S_j$ are in $G$ for each $j \in F_i$. By Theorem \ref{thm:simon-gen}, $G \cap S_i$ are subcubes for $i \leq l$. Then so are the intersections of their neighbours in $S_0$, including each $H_i$.

Let $N_{i,j}$ be the common neighbour cube of $S_i$ and $S_j$ that is not $S_0$. Suppose $v \in S_0$. Then by $v_i$ denote the neighbour of $v$ in $S_i$. Let $v_{i,j}$ be the common neighbour of $v_i$ and $v_j$ that is in $N_{i,j}$.

We will show that the Hamming distance between any two subcubes $H_i$ and $H_j$, $i \neq j$ is at least 2. 

Assume there is an edge $(u,v)$ such that $u \in H_i$ and $v \in H_j$. Then $u_k \in G$ for each $k \in F_i$. Since $i \neq j$, there is an index $t \in F_j$ such that $t \notin F_i$. The vertex $u$ is sensitive to $S_k$ for each $k \in F_i$ and, since $|F_i| = s_0(f)$, has full sensitivity. Thus $u_t \notin G$. On the other hand, since each $S_k$ is light, $u_k$ has full 1-sensitivity, hence $u_{k,t} \in G$ for all $k \in F_i$. This gives full 0-sensitivity to $u_t$. Hence $v_t \notin G$, a contradiction, since $v \in H_j$ and $t \in F_j$.

Thus there are no such edges and the Hamming distance between $H_i$ and $H_j$ is not equal to 1. That leaves two possibilities: either the Hamming distance between $H_i$ and $H_j$ is at least 2 (in which case we are done), or both $H_i$ and $H_j$ are equal to a single vertex $v$, which is not possible, as then $v$ would have a 0-sensitivity of at least $s_0(f)+1$.

Let $T = \bigcup_{i=1}^{|\mathcal F|} H_i$. We will prove that $T \neq S_0$. Since $s_1(f) \geq 2$, by Theorem \ref{thm:simon-gen} it follows that $\dim(G \cap S_i) = \dim(S_i) - s_1(f)+1 \leq \dim(S_0) - 1$ for each $i \in [l]$. Thus $\dim(H_1) \leq \dim(S_0) - 1$, and $H_1 \neq S_0$. Then it has a neighbour subcube $H_1'$ in $S_0$. But since the Hamming distance between $H_1$ and any other $H_i$ is at least 2, we have that $H_1' \cap H_i = \varnothing$, thus $T$ is not equal to $S_0$.

Therefore, $H_1, H_2, \ldots, H_{|\mathcal F|}$ satisfy all the conditions of Lemma \ref{thm:lemma}. Let $T'$ be the set of vertices in $S_0 \setminus T$ with a neighbour in $T$. Then, by Lemma \ref{thm:lemma}, $|T'| \geq |T|$ or, equivalently, $R(T', S_0) \geq R(T, S_0)$.

Then note that $R(T', S_0) \geq R(T, S_0) \geq \frac{\epsilon}{2^{s_1(f)-1}}$, since $R(G,S_i)= \frac{1}{2^{s_1(f)-1}}$ for all $i \in [l]$, there are a total of $ (s_0(f)-1)2^{s_1(f)-1}+\epsilon$ light cubes and each vertex in $S_0$ can have at most $s_0(f)$ neighbours in $G$.

Let $S_h$ be a heavy cube, and $i \in [|\mathcal F|]$. The neighbours of $H_i$ in $S_h$ must not be in $G$, or the corresponding vertex in $H_i$ would have sensitivity $s_0(f)+1$.

Let $k \in F_i$. As $S_k$ is light, all the vertices in $G \cap S_k$ are fully sensitive, therefore all their neighbours in $N_{k,h}$ are in $G$. Therefore all the neighbours of $H_i$ in $S_h$ already have full 0-sensitivity. Then all their neighbours must also not be in $G$.

This means that vertices in $T'$ can only have neighbours in $G$ in light cubes. But they can have at most $s_0(f)-1$ such neighbours each, otherwise they would be in $T$, not in $T'$. As $R(T', S_0) \geq \frac{\epsilon}{2^{s_1(f)-1}}$, the average sensitivity of vertices in $S_0$ is at most

\begin{align}
as(S_0) &\leq s_0(f) R(S_0 \setminus T', S_0) + (s_0(f)-1)R(T', S_0) \leq \\ &\leq s_0(f) \left(1-\frac{\epsilon}{2^{s_1(f)-1}}\right)+(s_0(f)-1)\frac{\epsilon}{2^{s_1(f)-1}} =\\&= s_0(f)-\frac{\epsilon}{2^{s_1(f)-1}}.
\end{align}

Then by inequality (\ref{eq:ratio-vs-as}) we have
\begin{equation}
m \frac{3}{2^{s_1(f)}} - \left((s_0(f)-1)2^{s_1(f)-1}+\epsilon\right)\frac{1}{2^{s_1(f)}} \leq s_0(f)-\frac{\epsilon}{2^{s_1(f)-1}}.
\end{equation}

Rearranging the terms, we get
\begin{align}
m \frac{3}{2^{s_1(f)}} &\leq \left((s_0(f)-1)2^{s_1(f)-1}+\epsilon\right)\frac{1}{2^{s_1(f)}} + s_0(f)-\frac{\epsilon}{2^{s_1(f)-1}} \\
m \frac{3}{2^{s_1(f)}} &\leq s_0(f) + \frac 1 2(s_0(f)-1) - \frac{\epsilon}{2^{s_1(f)}} \\
m \frac{3}{2^{s_1(f)}} &\leq \frac 3 2 s_0(f) - \frac 1 2 - \frac{\epsilon}{2^{s_1(f)}} \\
m &\leq 2^{s_1(f)-1}\left(s_0(f) - \frac 1 3\right) - \frac{\epsilon}{3}.
\end{align}
\end{proof}

Theorem \ref{thm:b-cert-sens} immediately implies Corollary \ref{thm:cert-sens}:

\begin{proof}[Proof of Corollary \ref{thm:cert-sens}]
If $f$ is constant, then $C(f) = s(f) = 0$ and the statement is true. Otherwise by Theorem \ref{thm:b-cert-sens}

\begin{align}
C(f) &= \max(C_0(f), C_1(f)) \leq \\
& \leq \max_{b \in \{0,1\}}\left(\max\left(2^{s_{1-b}(f)-1}\left(s_b(f)-\frac 1 3\right), s_b(f)\right) \right) \leq \\
& \leq \max\left(2^{s(f)-1}\left(s(f)-\frac 1 3\right), s(f)\right)
\end{align}

On the other hand, $bs(f) \leq C(f)$ is a well-known fact.

\end{proof}


\section{Relation between $C_0(f)$ and $s_0(f)$ for $s_1(f) = 2$}

Ambainis and Sun exhibited a class of functions that achieves the best known separation between sensitivity and block sensitivity, which is quadratic in terms of $s(f)$  \cite{Ambainis_Sun_2011}. This function also produces the best known separation between 0-certificate complexity and 0/1-sensitivity:

\begin{theorem}
For arbitrary $s_0(f)$ and $s_1(f)$, there exists a function $f$ such that
\begin{equation}
C_0(f) = \left(\frac 2 3 + o(1)\right) s_0(f)s_1(f). \label{eq:separation}
\end{equation}
\end{theorem}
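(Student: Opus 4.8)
The plan is to exhibit, for any two large target values $p$ and $q$, one explicit function $f$ --- a parametrised version of the Ambainis--Sun function \cite{Ambainis_Sun_2011} --- with $s_1(f)=p$, $s_0(f)=q$ and $C_0(f)=\bigl(\tfrac23+o(1)\bigr)pq$, and then to read off the three quantities directly from the construction.

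First I would set up the function. Take $k$ disjoint groups of variables, let $g$ be the small Ambainis--Sun gadget (the building block that realises the constant $\tfrac23$, in place of Rubinstein's ``two consecutive ones'' pair gadget), and put $f=g(y^{(1)})\vee\dots\vee g(y^{(k)})$, where $y^{(i)}$ is the restriction to the $i$-th group. I would record two properties of $g$ in a preliminary lemma: (a) the all-zero restriction of $g$ is $0$-insensitive, so that turning a group ``on'' costs several flips; and (b), writing $\mathcal P$ for the minimal $1$-inputs of $g$, the maximum number of pairwise disjoint members of $\mathcal P$ is $\bigl(\tfrac23-o(1)\bigr)s_1(g)$, while the minimum number of coordinates meeting every member of $\mathcal P$ is $\bigl(\tfrac23+o(1)\bigr)s_1(g)$. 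Tuning the gadget so that $s_1(g)=p$ and setting $k=q$ fixes the dimensions.

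Next I would compute the three measures. For $s_1(f)$: a $1$-input of $f$ has at least one ``on'' group, and flipping a coordinate outside an ``on'' group leaves some group on, so does not change $f$; hence the sensitivity of a $1$-input equals the sensitivity of $g$ on the corresponding restriction, giving $s_1(f)=s_1(g)=p$. For $s_0(f)$: at a $0$-input every group is off, a single flip can change $f$ only by turning one group on, and by (a) that happens only from a group-restriction one flip away from $\mathcal P$; the extremal $0$-input leaves every group one flip from being on, so $s_0(f)=k=q$ (up to a harmless normalisation making each such group contribute exactly one sensitive coordinate). For $C_0(f)$: the lower bound is $C_0(f)\ge bs_0(f)$, and by (b) the all-zero input carries, in each of the $k$ groups, $\bigl(\tfrac23-o(1)\bigr)s_1(g)$ pairwise disjoint activating blocks, so $bs_0(f)\ge\bigl(\tfrac23-o(1)\bigr)k\,s_1(g)=\bigl(\tfrac23-o(1)\bigr)s_0(f)s_1(f)$; the upper bound comes from building, for an arbitrary $0$-input, a $0$-certificate that in each group fixes to $0$ a minimum hitting set of the surviving activating patterns, which by (b) costs $\bigl(\tfrac23+o(1)\bigr)s_1(g)$ per group, hence $C_0(f)\le\bigl(\tfrac23+o(1)\bigr)k\,s_1(g)=\bigl(\tfrac23+o(1)\bigr)s_0(f)s_1(f)$.

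The main obstacle is property (b) for the concrete Ambainis--Sun gadget: one must check that its family of activating patterns admits both a packing and a covering of size $\tfrac23 s_1(g)\,(1+o(1))$, and this is exactly the combinatorial fact that yields the constant $\tfrac23$ rather than Rubinstein's $\tfrac12$. A secondary point is that the upper bound on $C_0(f)$ cannot be imported from \eqref{eq:previous} --- that bound is exponential in $s_1(f)$, which is unbounded here --- so it must be proved directly, and one has to verify that no $0$-input requires a longer certificate than the all-zero input. The sensitivity computations and the group-by-group reduction of certificate complexity to a hitting-set problem are, by contrast, routine once the gadget and its extremal inputs are in place.
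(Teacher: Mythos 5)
The paper does not actually prove this theorem: the statement is imported verbatim from the cited work of Ambainis and Sun, so there is no in-paper argument to compare yours against. Judged on its own terms, your outline does reproduce the architecture of that construction --- an OR-composition $f=\bigvee_{i} g\bigl(y^{(i)}\bigr)$ of a gadget $g$ with $s_0(g)=1$ and $s_1(g)=p$, the lower bound $C_0(f)\geq bs_0(f)$ coming from disjoint activating blocks inside each group, and the upper bound from a per-group hitting set of the minimal $1$-patterns --- and your sensitivity computations via the OR-composition rules (Fact~\ref{thm:composition}) are correct.

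As a proof, however, the proposal has a genuine hole, and you have located it yourself: property (b). Everything that makes the constant $\tfrac23$ (rather than Rubinstein's $\tfrac12$) true lives in the concrete gadget, and you never exhibit the gadget or verify that its family of minimal $1$-inputs has packing number and covering number both equal to $\bigl(\tfrac23\pm o(1)\bigr)s_1(g)$. These two numbers do not coincide for free --- the packing number of a set system is in general strictly smaller than its covering number --- so asserting both asymptotics simultaneously is precisely the nontrivial combinatorial content of \cite{Ambainis_Sun_2011}, not a detail to be checked afterwards. Likewise, the facts that $s_0(g)=1$ (your ``harmless normalisation''), that every $1$-input of $g$ has sensitivity exactly $s_1(g)$ rather than merely at most, and that no $0$-input of $g$ requires a longer certificate than the all-zero input are all gadget-specific and unverified here. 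The plan is the right one, but without the gadget it does not yet constitute a proof; completing it would amount to reproducing the construction and analysis of the cited paper.
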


Thus it is possible to achieve a quadratic gap between the two measures. As $bs_0(f) \leq C_0(f)$, it would be tempting to conjecture that quadratic separation is the largest possible. Therefore we are interested both in improved upper bounds and in functions that achieve quadratic separation with a larger constant factor.

In this section, we examine how $C_0(f)$ and $s_0(f)$ relate to each other for small $s_1(f)$. If $s_1(f) = 1$, it follows by Theorem \ref{thm:cert-sens} that $C_0(f) = s_0(f)$. Therefore we consider the case $s_1(f) = 2$.

Here we are able to construct a separation that is better than (\ref{eq:separation}) by a constant factor.

\begin{theorem} \label{thm:s1-2-example} There is a function $f$ with $s_1(f) = 2$ and arbitrary $s_0(f)$ such that
\begin{equation} \label{eq:3-2-separation}
C_0(f) = \left\lfloor\frac 3 4 s_0(f)s_1(f)\right\rfloor = \left\lfloor\frac 3 2 s_0(f)\right\rfloor.
\end{equation}
\end{theorem}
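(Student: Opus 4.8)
The plan is to construct $f$ explicitly as a function whose 1-inputs form a small, highly structured graph inside the hypercube, and then to stitch together many copies of a basic gadget so that $s_0(f)$ can be made arbitrarily large while the sensitivity on 1-inputs stays exactly $2$. Since $s_1(f)=2$, by Theorem~\ref{thm:simon-gen} any 1-certificate subcube (viewed after fixing the $0$-certificate bits of the worst $0$-input $z$) can have relative size $\tfrac14$; the bound $C_0(f)\le\tfrac32 s_0(f)$ coming from Case~1 of the proof of Theorem~\ref{thm:b-cert-sens} is attained when essentially all the neighbour subcubes $S_1,\dots,S_m$ of a minimum $0$-certificate $S_0$ are ``light'' and the average sensitivity $as(S_0)$ is as large as possible. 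So the design goal is: find $f$, a $0$-input $z$ with a minimum $0$-certificate $S_0$ of co-dimension $m=\lfloor\tfrac32 s_0(f)\rfloor$, such that each of the $m$ neighbour subcubes $S_i$ meets $G=f^{-1}(1)$ in a subcube of relative size $\tfrac14$, and such that the neighbours-in-$G$ pattern on $S_0$ realises average $0$-sensitivity $s_0(f)$ with every point of $S_0$ having $0$-sensitivity exactly $s_0(f)$. Concretely, I would take the base gadget on $3$ variables: $S_0$ is a point, its three neighbour subcubes are the three incident edges, and on each such edge exactly the far endpoint is a $1$-input — this gives $s_0=2$, $m=3=\lfloor\tfrac32\cdot2\rfloor$ and $s_1\le2$ locally (each $1$-vertex is sensitive to the edge back toward $S_0$ and at most one more). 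Then I would take a ``sum-like'' or direct-product-style combination of $\lceil s_0/2\rceil$ such blocks, arranging the blocks on disjoint variable sets so the $0$-sensitivities add while the $1$-sensitivity of any combined $1$-input is controlled by the structure of the combination.

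The key steps, in order, are: (1) give the formal definition of $f$ — I expect something in the spirit of the Ambainis–Sun construction or a graph colouring of $Q_n$, where the $1$-set $G$ is a union of translated copies of a fixed small graph, possibly the ``pointed cube'' / ``star'' gadget described above, glued along a path or tree of blocks; (2) verify $s_1(f)=2$: check that every $x$ with $f(x)=1$ has exactly two sensitive coordinates, one ``internal'' to its block and one ``linking'' coordinate, and that no third sensitive coordinate arises — this requires being careful about the boundary between adjacent blocks; (3) verify $s_0(f)$ equals the intended value: produce a witness $0$-input of the right sensitivity and show no $0$-input exceeds it, which amounts to bounding how many blocks can be simultaneously ``active'' around a single $0$-vertex; (4) exhibit the $0$-input $z$ and the minimum $0$-certificate of length $\lfloor\tfrac32 s_0(f)\rfloor$ — one direction is just exhibiting a certificate of that length (fix all the variables touching $z$'s incident $1$-edges), and the other direction, the genuine lower bound $C_0(f)\ge\lfloor\tfrac32 s_0(f)\rfloor$, is shown by a block-sensitivity-style argument: any short partial assignment leaves some $S_i$ not wholly outside $G$, forcing a longer certificate; (5) finally observe $\lfloor\tfrac34 s_0(f)s_1(f)\rfloor=\lfloor\tfrac32 s_0(f)\rfloor$ since $s_1(f)=2$, which is immediate.

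The main obstacle I anticipate is step~(2) together with the boundary analysis implicit in step~(3): keeping $s_1(f)$ pinned at exactly $2$ while letting $s_0(f)$ grow is delicate, because naively overlapping or juxtaposing gadgets tends to create $1$-inputs adjacent to $1$-inputs from two different gadgets, pushing $s_1$ up to $3$ or more. The construction must therefore either separate the gadgets by a ``buffer'' of $0$-inputs (which risks lowering the achievable $C_0/s_0$ ratio, so one must check the ratio $\tfrac32$ survives) or chain them along a path so that each $1$-input sees at most one neighbouring gadget. Proving the matching lower bound $C_0(f)\ge\lfloor\tfrac32 s_0(f)\rfloor$ in step~(4) is the other place where care is needed: it is essentially a combinatorial argument that the chosen $z$ has no short certificate, and I would prove it by showing that for the specific $z$, each of its $m=\lfloor\tfrac32 s_0(f)\rfloor$ neighbour subcubes contains a $1$-input, so by the standard fact that a $0$-certificate must kill every such subcube, its length is at least $m$. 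I would also double-check that the function is well-defined on all of $Q_n$ (i.e. the gadget copies do not conflict) and that it is non-constant, so that Corollary~\ref{thm:cert-sens} and the preliminaries apply.
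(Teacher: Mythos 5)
Your high-level plan---a small base gadget with $C_0=3$, $s_0=s_1=2$, replicated on disjoint variable sets so that $C_0$ and $s_0$ add while $s_1$ stays at $2$---is exactly the paper's strategy, but the concrete gadget you propose cannot exist, and the gadget is the heart of the construction rather than a detail. You put it on $3$ variables with $S_0$ a single point $z$ and want $C(f,z)=3$ while $s(f,z)=2$. This is impossible: on $n$ variables, $C(f,z)=n$ means that for every coordinate $i$, fixing the other $n-1$ coordinates to $z$'s values is not a $0$-certificate, i.e.\ the edge $\{z,z^{\{i\}}\}$ contains a $1$-input, which (since $f(z)=0$) forces $f(z^{\{i\}})=1$; hence $s(f,z)=n$. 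So any gadget with $C_0=3$ and $s_0=2$ needs at least $4$ variables; the paper uses $\textsc{Sort}_4$ (value $1$ iff the four bits are in sorted order), which has $C_0=3$, $s_0=s_1=2$. Your description is also internally inconsistent: a point of $Q_3$ has points, not edges, as neighbour subcubes, and making all three neighbours of $z$ into $1$-inputs yields $s_0=3$ and $s_1=3$.

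The combination step is likewise underspecified: you leave it as a ``sum-like or direct-product-style'' combination and then worry about buffers and chaining to keep $s_1=2$. The clean mechanism that removes these worries is OR-composition on disjoint variables: $C_0(f\lor g)=C_0(f)+C_0(g)$, $s_0(f\lor g)=s_0(f)+s_0(g)$, and $s_1(f\lor g)=\max(s_1(f),s_1(g))$, because a $1$-input of the OR is sensitive only inside the unique component evaluating to $1$ (if two components are $1$, the sensitivity is $0$). With this, $\bigvee_{i=1}^{k/2}\textsc{Sort}_4$ handles even $s_0=k$; for odd $k$ your $\lceil s_0/2\rceil$ copies would give $0$-sensitivity $s_0+1$, so a parity fix is needed, which the paper supplies by OR-ing in one copy of $\textsc{And}_2$ (with $C_0=s_0=1$, $s_1=2$). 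Your lower-bound step is fine in spirit---each neighbour subcube of the intended minimum certificate contains a $1$-input---but it only becomes checkable once a correct gadget is fixed.
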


\begin{proof}

Consider the function that takes value 1 iff its 4 input bits are in either ascending or descending sorted order. Formally,
\begin{equation}
\textsc{Sort}_4(x) = 1 \Leftrightarrow (x_1 \leq x_2 \leq x_3 \leq x_4) \lor (x_1 \geq x_2 \geq x_3 \geq x_4).
\end{equation}
One easily sees that $C_0(\textsc{Sort}_4) = 3$, $s_0(\textsc{Sort}_4) = 2$ and $s_1(\textsc{Sort}_4) = 2$.

Denote the 2-bit logical AND function by $\textsc{And}_2$. We have $C_0(\textsc{And}_2) = s_0(\textsc{And}_2) = 1$ and $s_1(\textsc{And}_2) = 2$.

To construct the examples for larger $s_0(f)$ values, we use the following fact (it is easy to show, and a similar lemma was proved in \cite{Ambainis_Sun_2011}):
\begin{fact} \label{thm:composition}
Let $f$ and $g$ be Boolean functions. By composing them with OR to $f \lor g$ we get
\begin{align}
C_0(f \lor g) &= C_0(f) + C_0(g), \\
s_0(f \lor g) &= s_0(f) + s_0(g), \\
s_1(f \lor g) &= \max(s_1(f), s_1(g)).
\end{align}
\end{fact}

Suppose we need a function with $k = s_0(f)$. Assume $k$ is even. Then by Fact \ref{thm:composition} for $g = \bigvee_{i=1}^{\frac k 2} \textsc{Sort}_4$ we have $C_0(g) = \frac 3 2 k$. If $k$ is odd, consider the function $g = \left(\bigvee_{i=1}^{\frac{k-1}{2}} \textsc{Sort}_4\right) \lor \textsc{And}_2$. Then by Fact \ref{thm:composition}  we have $C_0(g) = 3 \cdot \frac{k-1}{2} + 1 = \left\lfloor \frac 3 2 k \right\rfloor$.
\end{proof}

A curious fact is that both examples of (\ref{eq:separation}) and Theorem \ref{thm:s1-2-example} are obtained by composing some primitives using OR. The same fact holds for the best examples of separation between $bs(f)$ and $s(f)$ that preceded the \cite{Ambainis_Sun_2011} construction \cite{Rubinstein_1995,Virza_2011}.

We are also able to prove a slightly better upper bound in case $s_1(f) = 2$.

\begin{theorem}
Let $f$ be a Boolean function with $s_1(f) = 2$. Then
\begin{equation}
C_0(f) \leq \frac 9 5 s_0(f).
\end{equation}
\end{theorem}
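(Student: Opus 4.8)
The plan is to repeat the structure of the proof of Theorem~\ref{thm:b-cert-sens} in the special case $s_1(f)=2$, but extract more from the light/heavy dichotomy. Fix a vertex $z$ with $f(z)=0$ and $C(f,z)=C_0(f)=m$, take a minimum $0$-certificate subcube $S_0$ through $z$, and let $S_1,\ldots,S_m$ be its neighbour subcubes; as before, each $S_i$ meets $G=\{x\mid f(x)=1\}$, and $s(G\cap S_i,S_i)\le s_1(f)-1=1$. By Theorem~\ref{thm:simon-gen}, each $S_i$ is either \emph{light} with $R(G,S_i)=\tfrac12$ (and $G\cap S_i$ a codimension-$1$ subcube of $S_i$) or \emph{heavy} with $R(G,S_i)\ge\tfrac34$. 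Write $l$ for the number of light cubes. Summing relative sizes against the average sensitivity of $S_0$ gives, as in~(\ref{eq:ratio-vs-as}), the inequality $\tfrac34 m - \tfrac14 l \le as(S_0)\le s_0(f)$, so $m\le \tfrac43\big(s_0(f)+\tfrac14 l\big)$. This alone is useless without an upper bound on $l$, so the whole game is bounding $l$ (or, more precisely, trading off $l$ against a reduction in $as(S_0)$).

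The key new step is a finer accounting of the light cubes. For a light cube $S_i$, $G\cap S_i$ is a codimension-$1$ subcube, so its set of neighbours inside $S_0$ is a codimension-$1$ subcube $B_i$ of $S_0$, i.e.\ exactly half the vertices of $S_0$, each of which receives one unit of $0$-sensitivity from $S_i$. A vertex $v\in S_0$ has at most $s_0(f)$ sensitive directions; the light cubes alone place $v$ in $B_i$ for at least (number of light cubes containing $v$) of them. I would first handle the ``many light cubes'' regime: if $l$ is large relative to $s_0(f)$, then, just as in Case~2 of Theorem~\ref{thm:b-cert-sens}, taking $s_0(f)$-element subfamilies $F$ of the light cubes and letting $H_F$ be the (sub)cube of vertices of $S_0$ that are sensitive into $G$ along every $S_j$, $j\in F$, one shows the $H_F$ are pairwise at Hamming distance $\ge 2$, apply Lemma~\ref{thm:lemma} to get a ``boundary'' $T'$ of size $\ge|T|$, and argue that vertices of $T'$ see $G$ only through light cubes and at most $s_0(f)-1$ of them, forcing $as(S_0)\le s_0(f)-R(T',S_0)$ with $R(T',S_0)$ controlled by how much $l$ exceeds $(s_0(f)-1)\cdot 2^{s_1(f)-1}=2(s_0(f)-1)$. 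Combined with $\tfrac34 m - \tfrac14 l\le as(S_0)$, this already gives $m\le \tfrac95 s_0(f)$ once $l\ge 2(s_0(f)-1)$ — indeed that is precisely the crossover point where the two bounds $\tfrac43(s_0(f)+\tfrac14 l)$ and the Case~2 estimate meet.

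For the complementary regime $l\le 2(s_0(f)-1)$ — equivalently $l \le 2 s_0(f) - 2$ — plug this directly into $m\le \tfrac43\big(s_0(f)+\tfrac14 l\big)\le \tfrac43\big(s_0(f)+\tfrac12(s_0(f)-1)\big)=\tfrac43\cdot\tfrac32 s_0(f)-\tfrac23 = 2s_0(f)-\tfrac23$, which is weaker than $\tfrac95 s_0(f)$ only for small $s_0(f)$; I expect one needs to sharpen the threshold, choosing the split point $\lambda s_0(f)$ between the two regimes to optimize the resulting bound so that both branches yield $\tfrac95 s_0(f)$. Concretely: if $l\le \lambda s_0(f)$ then $m\le\tfrac43 s_0(f)(1+\lambda/4)$, and if $l>\lambda s_0(f)$ the Case~2-style argument gives $m\le \tfrac43 s_0(f)+\tfrac13 l - (\text{gain from }T')$; setting $\lambda$ so these coincide at $\tfrac95 s_0(f)$ pins down $\lambda=\tfrac35$ (giving $\tfrac43\cdot\tfrac{23}{20}=\tfrac{23}{15}<\tfrac95$? — the exact constant must be recomputed), and the residual slack is absorbed as in Theorem~\ref{thm:b-cert-sens}.

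The main obstacle I anticipate is the ``many light cubes'' branch: the clean Case~2 argument in Theorem~\ref{thm:b-cert-sens} only extracts a $T'$-gain proportional to the \emph{excess} $\epsilon = l - 2(s_0(f)-1)$, which vanishes at the threshold; to push the crossover down to $l=\tfrac35 s_0(f)$ one needs a boundary bound that is effective already for $l$ of order $s_0(f)$, not only $l>2s_0(f)$. This likely requires a more careful version of Lemma~\ref{thm:lemma}-type reasoning that accounts for the codimension-$1$ structure of each $G\cap S_i$ and the fact that a vertex of $S_0$ lying in many $B_i$ simultaneously is forced to be non-sensitive in the remaining heavy directions — i.e.\ a direct double-counting of $\sum_i |B_i|$ against $\sum_{v\in S_0} s(S_0,\cdot,v)$ with the heavy cubes contributing their $\ge\tfrac34$ share only to vertices \emph{not} saturated by light cubes. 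Getting that bookkeeping tight enough to land exactly at $\tfrac95$ is the delicate part; everything else is the same machinery already used for Theorem~\ref{thm:b-cert-sens}.
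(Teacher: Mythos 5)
There is a genuine gap here. You correctly set up the light/heavy dichotomy and the inequality $\frac{3}{4}m - \frac{1}{4}l \le as(S_0) \le s_0(f)$, but the whole theorem then hinges on an upper bound for $l$ that is strong enough in the regime $l \le 2(s_0(f)-1)$, and you do not supply one. Your own computation shows the problem: with $l \le 2(s_0(f)-1)$ you only get $m \le 2s_0(f) - \frac{2}{3}$, and this is \emph{weaker} than $\frac{9}{5}s_0(f)$ precisely when $s_0(f) > \frac{10}{3}$, i.e.\ for essentially all values of $s_0(f)$ (you have the direction of that comparison backwards). The Case-2 machinery of Theorem~\ref{thm:b-cert-sens} — the subcubes $H_F$, Lemma~\ref{thm:lemma}, and the boundary set $T'$ — only produces a gain proportional to the excess $\epsilon = l - 2(s_0(f)-1)$ and is therefore vacuous exactly where you need it. You name this obstacle honestly, but the ``more careful bookkeeping'' you gesture at is the entire content of the theorem, and your threshold optimization ($\lambda = \tfrac{3}{5}$, ``the exact constant must be recomputed'') does not close.

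The paper's proof does not refine Lemma~\ref{thm:lemma} at all; it introduces a different idea. Call two light cubes $S_i, S_j$ a \emph{pair} if the neighbour cubes of $G\cap S_i$ and $G\cap S_j$ in $S_0$ cover all of $S_0$ (equivalently, the two codimension-$1$ subcubes $G\cap S_i$, $G\cap S_j$ are defined by the same dimension with opposite values). Light cubes that pairwise fail to form pairs all project onto overlapping halves of $S_0$, so some single vertex of $S_0$ is sensitive to all of them; this forces a disjoint family $\mathcal P$ of pairs with $|\mathcal P| \ge l - s_0(f)$. The key step (Proposition~\ref{thm:pairs}) is then a \emph{certificate rotation}: given $\mathcal P$, one constructs another minimum $0$-certificate $S_0'$ of $z$ of the same dimension having at least $|\mathcal P|$ \emph{heavy} neighbours — each pair of complementary light cubes is traded for a heavy neighbour of the rotated certificate. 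Averaging the heavy counts of $S_0$ and $S_0'$ yields a minimum certificate with at least $\frac{q-1}{2}s_0(f)$ heavy neighbours (where $m = q\,s_0(f)$), hence at most $\frac{q+1}{2}s_0(f)$ light ones, and substituting into $\frac{l}{2} + \frac{3h}{4} \le s_0(f)$ gives $q \le \frac{9}{5}$. Without some substitute for this pairing/rotation argument, your plan cannot reach $\frac{9}{5}$.
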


\begin{proof}
Let $z$ be a vertex such that $f(z) = 0$ and $C(f,z) = C_0(f)$. Pick a 0-certificate $S_0$ of length $m=C_0(f)$ and $z \in S_0$. It has $m$ neighbour subcubes which we denote by $S_1$, $S_2$, $\ldots$, $S_{m}$. Let $n'=n-m = \dim(S_i)$ for each $S_i$.

We work with a graph $G$ induced on a vertex set $\{x \mid f(x) = 1\}$. Let $G_i = G \cap S_i$. As $S_0$ is a minimal certificate for $z$, we have $G_i \neq \varnothing$ for each $i \in [m]$. Since any $v \in G_i$ is sensitive to $S_0$, we have $s(G_i, S_i) \leq 1$. Thus by Theorem \ref{thm:simon-gen} either $G_i$ is an $(n'-1)$-subcube of $S_i$ with $R(G_i : S_i) = \frac{1}{2}$ or $R(G_i : S_i) \geq \frac{3}{4}$. We call $S_i$ \emph{light} or \emph{heavy}, respectively.

Let $N_{i,j}$ be the common neighbour cube of $S_i$, $S_j$ that is not $S_0$. Let $G_{i,j} = G \cap N_{i,j}$. Suppose $v \in S_0$. Let $v_i$ be the neighbour of $v$ in $S_i$. Let $v_{i,j}$ be the neighbour of $v_i$ and $v_j$ in $N_{i,j}$.

Let $S_i, S_j$ be light. By $G_i^0, G_j^0$ denote the neighbour cubes of $G_i, G_j$ in $S_0$. We call $\{S_i, S_j\}$ a \emph{pair}, iff $G_i^0 \cup G_j^0 = S_0$. In other words, a pair is defined by a single dimension. Also we have either $z_i \notin G$ or $z_j \notin G$: we call the corresponding cube the \emph{representative} of this pair.

\begin{proposition}\label{thm:pairs}
Let $\mathcal P$ be a set of mutually disjoint pairs of the neighbour cubes of $S_0$. Then there exists a 0-certificate $S_0'$ such that $z \in S_0'$, $\dim(S_0') = \dim(S_0)$ and $S_0'$ has at least $|\mathcal P|$ heavy neighbour cubes.
\end{proposition}

\begin{proof}
Let $\mathcal R$ be a set of mutually disjoint pairs of the neighbour cubes of $S_0$. W.l.o.g. let $S_1, \ldots, S_{|\mathcal R|}$ be the representatives of $\mathcal R$. Let $F_i$ be the neighbour cube of $S_i \setminus G$ in $S_0$. Let $B_{\mathcal R} = \bigcap_{i=1}^{|\mathcal R|} F_i$. Suppose $S_0+x$ is a coset of $S_0$ and $x_t = 0$ if the $t$-th dimension is not fixed in $S_0$: let $B_{\mathcal R}(S_0+x)$ be $B_{\mathcal R}+x$.

Pick $\mathcal R \subseteq \mathcal P$ with the largest size, such that for each two representatives $S_i$, $S_j$ of $\mathcal R$, $B_{\mathcal R}(N_{i,j})$ is a 0-certificate.

We will prove that the subcube $S_0'$ spanned by $B_{\mathcal R}, B_{\mathcal R}(S_1), \ldots, B_{\mathcal R}\left(S_{|\mathcal R|}\right)$ is a 0-certificate. It corresponds to an $|\mathcal R|$-dimensional hypercube $Q_{|\mathcal R|}$ where $B_{\mathcal R}(S_0+x)$ corresponds to a single vertex for each coset $S_0+x$ of $S_0$.

Let $T \subseteq Q_{|\mathcal R|}$ be the graph induced on $\{v \mid B_{\mathcal R}(H) \text{ corresponds to } v, B_{\mathcal R}(H)\allowbreak \text{is not a 0-certificate}\}$. Then we have $s(T, Q_{|\mathcal R|}) \leq 2$. Suppose $B_{\mathcal R}$ corresponds to $0^{|\mathcal R|}$. Let $L_d$ be the set of $Q_{|\mathcal R|}$ vertices that are at distance $d$ from $0^{|\mathcal R|}$. We prove by induction that $L_d \cap T = \varnothing$ for each $d$.

\begin{proof}
\textbf{Base case.} $d \leq 2$. The required holds since all $B_{\mathcal R}, B_{\mathcal R}(S_i), B_{\mathcal R}(N_{i, j})$ are 0-certificates.

\textbf{Inductive step.} $d \geq 3$. Examine $v \in L_d$. As $v$ has $d$ neighbours in $L_{d-1}$, $L_{d-1} \cap T = \varnothing$ and $s(T, Q_{|\mathcal R|}) \leq 2$, we have that $v \notin T$.
\end{proof}

Let $k$ be the number of distinct dimensions that define the pairs of $\mathcal R$, then $k \leq |\mathcal R|$. Hence $\dim(S_0') = |\mathcal R| + \dim(B_{\mathcal R}) = |\mathcal R| + (\dim(S_0)-k) \geq \dim(S_0)$. But $S_0$ is a minimal 0-certificate for $z$, therefore $\dim(S_0') = \dim(S_0)$.

Note that a light neighbour $S_i$ of $S_0$ is separated into a 0-certificate and a 1-certificate by a single dimension, hence we have $s(G,S_i,v) = 1$ for every $v \in S_i$. As $S_i$ neighbours $S_0$, every vertex in its 1-certificate is fully sensitive. The same holds for any light neighbour $S_i'$ of $S_0'$.

Now we will prove that each pair in $\mathcal P$ provides a heavy neighbour for $S_0'$. Let $\{S_a, S_b\} \in \mathcal P$, where $S_a$ is the representative. We distinguish two cases:
\begin{itemize}
\item $B_{\mathcal R}(S_b)$ is a 1-certificate. Since $S_b$ is light, it has full 1-sensitivity. Therefore, $v \in G$ for all $v\in B_R(N_{i, b})$, for each $i \in [|\mathcal R|]$. Let $S_b'$ be the neighbour of $S_0'$ that contains $B_{\mathcal R}(S_b)$ as a subcube. Then for each $v \in B_{\mathcal R}(S_b)$ we have $s(G,S_b',v) = 0$. Hence $S_b'$ is heavy.
\item Otherwise, $\{S_a, S_b\}$ is defined by a different dimension than any of the pairs in $\mathcal R$. Let $\mathcal R' = \mathcal R \cup \{S_a, S_b\}$. Examine the subcube $B_{\mathcal R'}$. By definition of $\mathcal R$, there is a representative $S_i$ of $\mathcal R$ such that $B_{\mathcal R'}(N_{i,a})$ is not a 0-certificate. Let $S'_a$ be the neighbour of $S_0'$ that contains $B_{\mathcal R}(S_a)$ as a subcube. Then there is a vertex $v \in B_{\mathcal R'}(S_a)$ such that $s(G, S_a', v) \geq 2$. Hence $S_a'$ is heavy.
\end{itemize}

\end{proof}

Let $\mathcal P$ be the largest such set. Let $l$ and $h = m- l$ be the number of light and heavy neighbours of $S_0$, respectively. Each pair in $\mathcal P$ gives one neighbour in $G$ to each vertex in $S_0$. Now examine the remaining $l-2 | \mathcal P|$ light cubes. As they are not in $\mathcal P$, no two of them form a pair. Hence there is a vertex $v \in S_0$ that is sensitive to each of them. Then $s_0(f) \geq s_0(f,v) \geq |\mathcal P| + (l-2 | \mathcal P|) = l - |\mathcal P|$. Therefore $|\mathcal P| \geq l - s_0(f)$. 

Let $q$ be such that $m=q s_0(f)$. Then there are $q s_0(f)-l$ heavy neighbours of $S_0$. On the other hand, by Proposition \ref{thm:pairs}, there exists a minimal certificate $S_0'$ of $z$ with at least $l-s_0(f)$ heavy neighbours. Then $z$ has a minimal certificate with at least $\frac{(q s_0(f)-l)+(l-s_0(f))}{2} = \frac{q-1}{2}\cdot s_0(f)$ heavy neighbour cubes.

W.l.o.g. let $S_0$ be this certificate. Then $l = qs_0(f) - h \leq (q - \frac{q-1}{2}) s_0(f) = \frac{q+1}{2}\cdot s_0(f)$. As each $v \in G_i$ for $i \in [m]$ gives sensitivity 1 to its neighbour in $S_0$, 
\begin{align}
l \frac 1 2 + h\frac 3 4 &\leq s_0(f).
\end{align}

Since the constant factor at $l$ is less than at $h$, we have
\begin{equation}
\frac{q+1}{2} \cdot s_0(f) \cdot \frac 1 2 + \frac{q-1}{2} \cdot s_0(f)  \cdot \frac 3 4 \leq s_0(f)
\end{equation}

By dividing both sides by $s_0(f)$ and simplifying terms, we get $q \leq \frac 9 5$.

\end{proof}

This result shows that the bound of Theorem \ref{thm:cert-sens} can be improved. However, it is still not tight. For some special cases, through extensive casework we can also prove the following results:

\begin{theorem} \label{thm:s0-3}
Let $f$ be a Boolean function with $s_1(f) = 2$ and $s_0(f) \geq 3$. Then
\begin{equation}
C_0(f) \leq 2s_0(f) - 2.
\end{equation}
\end{theorem}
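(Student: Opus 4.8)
The plan is to re-use the framework of the proofs above. Fix a vertex $z$ with $f(z)=0$ and $C(f,z)=C_0(f)=:m$, a minimal $0$-certificate $S_0\ni z$ of co-dimension $m$, its $m$ neighbour subcubes $S_1,\dots,S_m$, and the graph $G$ induced on $\{x\mid f(x)=1\}$; set $G_i=G\cap S_i\neq\varnothing$, so that $s(G_i,S_i)\le 1$ and, by Theorem \ref{thm:simon-gen}, each $S_i$ is either \emph{light} ($R(G_i,S_i)=\frac12$) or \emph{heavy} ($R(G_i,S_i)\ge\frac34$). Let $l$ be the number of light cubes and $h=m-l$ the number of heavy ones. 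The averaging inequality $\frac{l}{2}+\frac{3h}{4}\le as(S_0)\le s_0(f)$, together with $m=l+h$, gives $m\le 2s_0(f)-\frac{h}{2}$; since $m$ is an integer, this already proves the theorem whenever $h\ge 3$. So from now on we may assume $h\le 2$.

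Next I would reduce to small values of $s_0(f)$. By the bound $C_0(f)\le\frac{9}{5}s_0(f)$ established above we have $C_0(f)\le\bigl\lfloor\tfrac{9}{5}s_0(f)\bigr\rfloor$, and a direct check shows $\bigl\lfloor\tfrac{9}{5}s_0(f)\bigr\rfloor\le 2s_0(f)-2$ for every $s_0(f)\ge 6$. Hence it remains to treat $s_0(f)\in\{3,4,5\}$ (still with $h\le 2$).

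For these residual cases I would bring in the pair structure of the light cubes. Let $\mathcal P$ be a maximum set of mutually disjoint pairs among the light cubes; since all light cubes outside $\mathcal P$, together with one representative of each pair in $\mathcal P$, are sensitive at a common vertex of $S_0$, one gets $|\mathcal P|\ge l-s_0(f)$. Proposition \ref{thm:pairs} then yields another minimal $0$-certificate $S_0'$ of $z$, of the same co-dimension $m$, with at least $|\mathcal P|$ heavy neighbours; re-running the averaging inequality on $S_0'$ together with integrality settles everything as soon as $|\mathcal P|\ge 3$, i.e. as soon as $l\ge s_0(f)+3$. So we may additionally assume $l\le s_0(f)+2$, hence $m=l+h\le s_0(f)+4$. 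Combining this with $h\le 2$, the averaging inequality, and the requirement $m>2s_0(f)-2$ leaves only a short list of admissible triples $(s_0(f),l,h)$ — a genuinely finite problem.

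The remaining and main obstacle is to eliminate these few surviving configurations, where the averaging bound is slack by exactly one unit and counting alone no longer works. Here I would argue geometrically rather than numerically: classify how the light cubes pair up along distinct hypercube directions, track the subcubes $G_i^0\subseteq S_0$ and the common neighbour cubes $N_{i,j}$, and repeatedly use that a light $S_i$ forces full $1$-sensitivity on the vertices of $G_i$, so that their neighbours in $S_0$ and in the $N_{i,j}$ accumulate $0$-sensitivity — exactly the propagation used in Proposition \ref{thm:pairs} and in the proof of Theorem \ref{thm:b-cert-sens}. For each admissible $(s_0(f),l,h)$ this should either expose a vertex of $S_0$, or of a coset of $S_0$, with $0$-sensitivity larger than $s_0(f)$, or else let us build a $0$-certificate of $z$ of smaller co-dimension — in either case a contradiction. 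Carrying this out is the ``extensive casework'' of the statement; the delicate point is the small-$s_0(f)$, $h\le 2$ regime, where one must really understand the mutual position of the pairs and their common neighbour subcubes, not merely their number.
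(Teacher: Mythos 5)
Your reductions are all correct: the integrality argument disposing of $h\ge 3$, the use of the $\frac95$ bound to dispose of $s_0(f)\ge 6$, and the use of Proposition \ref{thm:pairs} to force $l\le s_0(f)+2$ are each sound, and they do confine the problem to a short list of triples $(s_0(f),l,h)$ such as $(3,5,0)$, $(3,3,2)$, $(4,5,2)$, $(5,7,2)$. But the proof stops exactly where the real difficulty begins: for these surviving configurations you offer only a programme (``this should either expose a vertex \dots or else let us build a shorter certificate''), not an argument. These are precisely the tight cases where the averaging inequality is slack by less than one unit, and eliminating them is the entire content of the theorem; without carrying out that casework the proposal is a plan, not a proof. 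Note also that your list still contains a configuration with $h=0$ and $l=2s_0(f)-1$, which cannot be killed by any argument about heavy cubes; you would additionally need the observation (which you never make) that no vertex of $S_0$ can have $s_0(f)$ neighbours in $G$ lying in light cubes, whence $l\le 2s_0(f)-2$ by pigeonhole.

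The paper closes the argument quite differently, and more globally, without invoking either the $\frac95$ bound or Proposition \ref{thm:pairs}. Assuming $m\ge 2s_0(f)-1$, the averaging inequality gives $l\ge 2s_0(f)-3$ (with a separate short argument upgrading this to $l\ge 4$ when $s_0(f)=3$). One then shows no vertex of $S_0$ has $s_0(f)$ light-cube neighbours in $G$: since $l>s_0(f)$ there is a light $S_j$ with $v_j\notin G$, and full $1$-sensitivity of the $v_i$ forces $v_j$ to have $0$-sensitivity $s_0(f)+1$. Pigeonhole then gives $l\le 2s_0(f)-2$, which finishes the case $h=0$; otherwise at least half the vertices of $S_0$ have exactly $s_0(f)-1$ light-cube neighbours in $G$, and for each such vertex $v$ a propagation through two light cubes $S_j$ with $v_j\notin G$ shows $v_h\notin G$ for any heavy $S_h$, so $R(S_h\setminus G,S_h)\ge\frac12$, contradicting heaviness. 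If you want to salvage your approach you must either supply a configuration-by-configuration elimination for each surviving triple or adopt a global density argument of this kind.
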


\begin{theorem} \label{thm:s0-5}
Let $f$ be a Boolean function with $s_1(f) = 2$ and $s_0(f) \geq 5$. Then
\begin{equation}
C_0(f) \leq 2s_0(f) - 3.
\end{equation}
\end{theorem}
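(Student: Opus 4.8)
## Proof Proposal for Theorem~\ref{thm:s0-5}

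The plan is to refine the counting argument used for the bound $C_0(f) \leq \frac{9}{5} s_0(f)$ and for Theorem~\ref{thm:s0-3}, tracking the light/heavy structure of the neighbour cubes $S_1, \ldots, S_m$ of a minimal $0$-certificate $S_0$ more carefully. As before, set $G = \{x \mid f(x)=1\}$, $G_i = G \cap S_i$, let $l$ be the number of light cubes and $h = m - l$ the number of heavy cubes. The two basic inequalities are the averaging bound $l\cdot\frac12 + h\cdot\frac34 \leq s_0(f)$ and the pair bound: if $\mathcal P$ is a maximum set of disjoint pairs among the light cubes, then $|\mathcal P| \geq l - s_0(f)$, and by Proposition~\ref{thm:pairs} we may assume $S_0$ has at least $\frac{(h)+(l-s_0(f))}{2}$ heavy neighbours after possibly replacing $S_0$. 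The goal is to push $m = l+h \leq 2s_0(f) - 3$.

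First I would argue that when $m$ is close to $2s_0(f)$, the number of light cubes must be nearly $2s_0(f)$ and the number of heavy cubes very small, since each heavy cube costs $\frac34$ against the budget $s_0(f)$ while each light cube costs only $\frac12$. Concretely, $h \leq 4s_0(f) - 2m$, so $m \geq 2s_0(f)-1$ forces $h \leq 2$. The case $h = 0$ (all cubes light) should be handled by a direct extremal argument on the family of neighbour cubes $G_i^0 \subseteq S_0$: each is a subcube of co-dimension $1$ in $S_0$, each vertex of $S_0$ lies outside at most $s_0(f)$ of them, and $z$ lies outside at least... here one wants to show that $2s_0(f)$ distinct half-cubes of $S_0$ cannot all avoid forming too many pairs, forcing $m \leq 2s_0(f)$ with equality only in a rigid configuration that can then be excluded using $s_1(f)=2$ exactly (as in the \textsc{Sort}$_4$-type analysis) together with the $s_0(f) \geq 5$ hypothesis. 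The cases $h = 1$ and $h = 2$ would be handled by combining the improved pair bound with the observation from the proof of Proposition~\ref{thm:pairs} that a heavy cube already present (not produced by a pair) constrains which dimensions the light cubes can use.

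The main obstacle will be the $h \leq 2$ regime, where the slack in the averaging inequality is too small to conclude by arithmetic alone and one genuinely needs structural information: specifically, ruling out the configuration where $l = 2s_0(f)$ light cubes pair up into exactly $s_0(f)$ disjoint pairs along $s_0(f)$ distinct dimensions. I expect this requires showing that in such a configuration the common neighbour cubes $N_{i,j}$ and the full-sensitivity propagation (every vertex of the $1$-certificate half of a light cube is fully sensitive, hence its neighbours in $N_{i,j}$ lie in $G$) force a vertex with $0$-sensitivity exceeding $s_0(f)$ once $s_0(f) \geq 5$ — essentially a parity/packing obstruction that fails for $s_0(f) \leq 4$, which is consistent with \textsc{Sort}$_4$ giving $C_0 = 3 = 2\cdot 2 - 1$ at $s_0 = 2$ and the conjectured tight bound $\lfloor \frac32 s_0(f)\rfloor$. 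This casework is exactly the ``extensive casework'' the authors allude to, so I would organise it as a short sequence of claims, each eliminating one value of $h$, and relegate the bookkeeping to the heavy-cube counting already set up above.
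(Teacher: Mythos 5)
Your high-level plan (assume the bound fails, use the averaging inequality $l\cdot\frac12+h\cdot\frac34\leq s_0(f)$ to bound the number of heavy cubes, then do casework on $h$ with structural arguments about full-sensitivity propagation through the cubes $N_{i,j}$) is the same general shape as the paper's argument, but there are two genuine gaps. First, your arithmetic sets up the wrong case split. Since Theorem~\ref{thm:s0-3} already gives $C_0(f)\leq 2s_0(f)-2$, the case you must exclude is exactly $m=2s_0(f)-2$; plugging this into $h\leq 4s_0(f)-2m$ gives $h\leq 4$, not $h\leq 2$. Your proposal only plans to treat $h\in\{0,1,2\}$, but the cases $h=3$ and $h=4$ are real and are not disposable by arithmetic --- the paper devotes substantial separate arguments to each (for $h=4$ it forces $R(G_i,S_i)=\frac34$ exactly for every heavy cube and counts fully sensitive neighbours; for $h=3$ it needs a two-stage argument showing first that no vertex of $S_0$ has $k-1$ light $G$-neighbours and then a pigeonhole count on the heavy cubes). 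Relatedly, your description of the hard configuration ($l=2s_0(f)$ light cubes forming $s_0(f)$ disjoint pairs) cannot arise under the actual contradiction hypothesis, where $l\leq m=2s_0(f)-2$.

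Second, the structural core is not supplied. The paper's proof rests on two ingredients you do not have: (i) Lemma~\ref{thm:fully-sensitive}, which says that in a heavy cube the set of fully sensitive vertices of $G_i$ has size at least $2|S_i\setminus G_i|$ and every vertex of $S_i\setminus G_i$ has internal sensitivity at least $2$ --- this is what makes vertices outside $G$ in heavy cubes ``expensive'' and drives most of the contradictions; and (ii) the claim that no vertex of $S_0$ can have $k=s_0(f)$ fully sensitive neighbours, proved by an induction that propagates non-membership in $G$ across an entire neighbour cube $S_m$ until $G_m=\varnothing$. Moreover, in the all-light case and the $h=2$ case the paper does not derive a contradiction directly; it \emph{switches certificates}, replacing $S_0$ by a cube such as $Q_0\cap(S_0\cup S_2)$ built from one half of an opposite pair of light cubes, which is again a minimal $0$-certificate but has strictly more heavy neighbours, thereby reducing to a case with larger $h$. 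Your proposal gestures at Proposition~\ref{thm:pairs} and at a ``parity/packing obstruction'' but does not construct either mechanism, and the certificate-switching step in particular is not something the averaging and pairing bounds you cite will produce on their own. As it stands the proposal is a plausible outline of where the difficulty lies, not a proof.
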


These theorems imply that for $s_1(f) = 2$, $s_0(f) \leq 6$ we have $C_0(f) \leq \frac 3 2 s_0(f)$, which is the same separation as achieved by the example of Theorem \ref{thm:s1-2-example}. This leads us to the following conjecture:

\begin{conjecture} \label{thm:s1-2-conj}
Let $f$ be a Boolean function with $s_1(f) = 2$. Then
\begin{equation}
C_0(f) \leq \frac 3 2 s_0(f).
\end{equation}
\end{conjecture}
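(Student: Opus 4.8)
The natural line of attack is to sharpen the light/heavy analysis used for the $C_0(f)\le\frac 9 5 s_0(f)$ bound. Fix $z$ with $f(z)=0$ and $C(f,z)=C_0(f)=m$, a minimal $0$-certificate $S_0\ni z$ of length $m$, its neighbour subcubes $S_1,\dots,S_m$, and $G=\{x\mid f(x)=1\}$, $G_i=G\cap S_i\neq\varnothing$. Since $s_1(f)=2$ we have $s(G_i,S_i)\le 1$, so Theorem \ref{thm:simon-gen} makes each $S_i$ either \emph{light} ($G_i$ a codimension-$1$ subcube of $S_i$, $R(G_i,S_i)=\frac 1 2$) or \emph{heavy} ($R(G_i,S_i)\ge\frac 3 4$). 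With $l$ light and $h=m-l$ heavy cubes, summing relative sizes against the average sensitivity of $S_0$ gives $\frac 1 2 l+\frac 3 4 h\le s_0(f)$, i.e. $m\le 2s_0(f)-\frac 1 2 h$. Hence $C_0(f)\le 2s_0(f)$, and the conjecture would follow if some minimal certificate always had $h\ge s_0(f)$; but already $\textsc{Sort}_4$ and the OR-power construction of Theorem \ref{thm:s1-2-example}, which attain $\frac 3 2 s_0(f)$ exactly, have $h<s_0(f)$, so the extra slack must be squeezed out of how the light cubes sit inside $S_0$.

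Reflect each light $G_i$ back to a codimension-$1$ subcube $G_i^0=\{x_{j_i}=b_i\}$ of $S_0$ and let $a_j^b$ count the light cubes with $(j_i,b_i)=(j,b)$. A \emph{pair} is two light cubes whose subcubes are complementary halves of $S_0$, so a maximum disjoint family $\mathcal P$ of pairs has size $\sum_j\min(a_j^0,a_j^1)$, whereas the largest number of light cubes to which any vertex of $S_0$ is sensitive equals $\sum_j\max(a_j^0,a_j^1)\le s_0(f)$; subtracting, $l-|\mathcal P|\le s_0(f)$. Folding $\mathcal P$ with the surgery of Proposition \ref{thm:pairs} gives a minimal certificate $S_0'$ (same $z$, same length) with at least $|\mathcal P|$ heavy neighbours; averaging this against the heavy count of $S_0$ and substituting into the weight inequality is precisely what yields $\frac 9 5$. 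Reaching $\frac 3 2$ means getting past both lossy moves here — the crude bound $|\mathcal P|\ge l-s_0(f)$, useless once $l$ is only slightly above $s_0(f)$, and the averaging, which discards information about the two certificates. The avenues I would try are: (i) a global argument that considers all minimal certificates of $z$ at once, or a suitably iterated folding, to show that some minimal certificate has $h$ large enough that $m\le 2s_0(f)-\frac 1 2 h$ gives $\frac 3 2 s_0(f)$; (ii) using the consistency constraints among the common neighbour cubes $N_{i,j}$ (as in the proofs of Lemma \ref{thm:lemma} and Proposition \ref{thm:pairs}) to show that a light cube left unpaired after removing a maximum $\mathcal P$ — these use at most one side per internal coordinate and hence cover $S_0$ sparsely — must either be foldable after all or force a heavy cube elsewhere; (iii) exploiting integrality of $m$, which together with (ii) is already what pins down the cases $s_0(f)\le 6$ behind Theorems \ref{thm:s0-3} and \ref{thm:s0-5}.

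In parallel I would attempt strong induction on $s_0(f)$ (or on $m$) toward the sharper family $C_0(f)\le 2s_0(f)-k$ whenever $s_0(f)\ge 2k-1$, which for $k=2,3$ is exactly Theorems \ref{thm:s0-3} and \ref{thm:s0-5} and whose optimal instance $k=\lceil s_0(f)/2\rceil$ gives $C_0(f)\le\lfloor\frac 3 2 s_0(f)\rfloor$, matching Theorem \ref{thm:s1-2-example} and implying the conjecture. One would hope that a minimal certificate of length exceeding $2s_0(f)-k$ with $s_0(f)\ge 2k-1$ either contains a light pair whose folding yields a configuration of strictly smaller $0$-sensitivity, to which induction applies, or is so rigid that some vertex of $S_0$ accumulates sensitivity greater than $s_0(f)$.

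The main obstacle is that the step separating $\frac 9 5$ from $\frac 3 2$ is exactly the part that is currently only accessible by casework. The weight inequality and Proposition \ref{thm:pairs} on their own stop at $\frac 9 5$, and the difficulty — visible already in the OR-powers of $\textsc{Sort}_4$, where every light cube is paired yet $h<s_0(f)$ — is to control, uniformly rather than case by case, a family of arbitrarily many mutually overlapping light half-cubes together with the consistency conditions that their $f$-values impose on the common neighbour cubes $N_{i,j}$. Finding the right invariant of that configuration is where I expect the real work to be.
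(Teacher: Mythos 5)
This statement is labelled a conjecture in the paper, and the paper does not prove it: the strongest general bound it establishes for $s_1(f)=2$ is $C_0(f)\le\frac 9 5 s_0(f)$, with $C_0(f)\le\frac 3 2 s_0(f)$ verified only for $s_0(f)\le 6$ via the casework behind Theorems \ref{thm:s0-3} and \ref{thm:s0-5}. Your writeup is accordingly not a proof and, to its credit, does not claim to be one, so there is no gap to fault beyond the one the authors themselves leave open. Your account of the known machinery is accurate: the weight inequality $\frac 1 2 l+\frac 3 4 h\le s_0(f)$ does give $m\le 2s_0(f)-\frac 1 2 h$, the counting $l-|\mathcal P|=\sum_j\max(a_j^0,a_j^1)\le s_0(f)$ is a correct reformulation of the paper's bound $|\mathcal P|\ge l-s_0(f)$, and you correctly locate the two lossy steps (the crude lower bound on $|\mathcal P|$ and the averaging of the heavy counts of $S_0$ and $S_0'$) as exactly what caps the argument at $\frac 9 5$. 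Your proposed directions --- iterated or global folding over all minimal certificates of $z$, exploiting the consistency constraints on the cubes $N_{i,j}$, and an induction toward the family $C_0(f)\le 2s_0(f)-k$ for $s_0(f)\ge 2k-1$ --- are sensible and consistent with how the paper's partial results are obtained, but none of them is carried out, so the conjecture remains open both in your proposal and in the paper.
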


We consider $s_1(f) = 2$ to be the simplest case where we don't know the actual tight upper bound on $C_0(f)$ in terms of $s_0(f), s_1(f)$. Proving Conjecture \ref{thm:s1-2-conj} may provide insights into relations between $C(f)$ and $s(f)$ for the general case.

\bibliographystyle{abbrv}
\bibliography{bibliography}

\appendix
\section{Proof of Theorem \ref{thm:s0-3}}

\begin{proof}
Let $G$ be a graph induced on a vertex set $\{x \mid f(x) = 1\}$. Suppose $z$ is a vertex such that $f(z) = 0$ and $C(f, z) = C_0(f)$. Pick a 0-certificate $S_0$ of length $C_0(f)$ and $z \in S_0$. It has $m=C_0(f)$ neighbour subcubes which we denote by $S_1, S_2, \ldots, S_m$. Since $S_0$ is a minimum certificate for $z$, we have that $S_i \cap G \neq \varnothing$ for $i \in [m]$. Let the dimension of $S_0$ be $n'$.

Let $N_{i,j}$ be the common neighbour cube of $S_i$ and $S_j$ that is not $S_0$. Suppose $v \in S_0$. Then by $v_i$ denote the neighbour of $v$ in $S_i$. Let $v_{i,j}$ be the common neighbour of $v_i$ and $v_j$ that is in $N_{i,j}$.

As $S_0$ is a 0-certificate, each vertex of $G$ in any $S_i$ is sensitive to its neighbour in $S_0$. Then $s(G \cap S_i, S_i) \leq 1$. By Theorem \ref{thm:simon-gen}, it follows that either $R(G, S_i) = \frac 1 2$ and $G \cap S_i$ is a $(n'-1)$-dimensional subcube or $R(G, S_i) \geq \frac 3 4$. We call such subcubes \emph{light} or \emph{heavy} respectively. Let the number of light cubes be $l$, then the number of heavy cubes is $m-l$. Assume the light cubes are $S_1, S_2, \ldots, S_l$.

Each vertex of $G$ in any $S_i$ gives sensitivity 1 to its neighbour in $S_0$, thus
\begin{equation}
\sum_{i=1}^m R(G, S_i) \leq s_0(f).
\end{equation}
Hence
\begin{align}
l \frac 1 2 + (m-l)\frac 3 4 &\leq s_0(f), \\
3m - 4s_0(f) & \leq l.
\end{align}

Assume on the contrary that $C_0(f) > 2s_0(f)-2$ or equivalently $m \geq 2s_0(f) - 1$. In that case $l \geq 3(2s_0(f)-1) - 4s_0(f) = 2s_0(f) - 3$.

Let $s_0(f) = 3$, then $l \geq 3$. First assume $l = 3$. Then $S_1, S_2, S_3$ are light and $S_4, S_5$ are heavy. This implies
\begin{equation}
3 = s_0(f) \geq \sum_{i=1}^m R(G, S_i) \geq 3\cdot \frac 1 2 + 2 \cdot \frac 3 4 = 3.
\end{equation}
Thus for each $j \in [4;5]$ we have $R(G, S_j) = \frac 3 4$, which means that $R(S_j \setminus G, S_j) = \frac 1 4$. By Theorem \ref{thm:simon-gen}, $s(S_j \setminus G, S_j) \geq 2$. Let $v_j \notin G$ be a vertex with $s(S_j \setminus G, S_j, v_j) = 2$. Assume $v_i \in G$ for some $i \in [3]$. As $v_i$ has full sensitivity, its neighbour $v_{i,j} \in G$. Hence there is at most one such $i \in [3]$ that $v_i \in G$. But then $v \in S_0$ can only have neighbours in $G$ in one light and one heavy cube. Hence $v$ does not have full sensitivity: a contradiction, since $\sum_{i=1}^m R(G, S_i) = 3$. Thus we have $l \geq 4$ for $s_0(f) = 3$.

Now let $s_0(f) \geq 3$. Assume there is a vertex $v \in S_0$ that has $s_0(f)$ neighbours in $G$ among light cubes. Let $S_i$ be a light cube with $v_i \in G$. If $s_0(f) \geq 4$, then $l \geq 2s_0(f)-3 > s_0(f)$; if $s_0(f) = 3$, then $l \geq 4 > 3$. Thus $l > s_0(f)$. Hence there is a light cube $S_j$ such that $v_j \notin G$. Since $v_i$ has full sensitivity, $v_{i,j} \in G$, and there are $s_0(f)$ such $i$. But $v_j$ is also sensitive to one neighbour in $S_j$; hence $v_j$ has sensitivity $s_0(f)+1$, a contradiction.

Thus any vertex $v \in S_0$ has at most $s_0(f)-1$ neighbours in $G$ in light cubes. Then $l \leq 2(s_0(f)-1)$, otherwise we would have a contradiction by the pigeonhole principle. If there are no heavy cubes, then $m = l \leq 2s_0(f) - 2$ and we are done. Otherwise there is a heavy cube $S_h$. Let $T$ be the subset of vertices in $S_0$ that each has exactly $s_0(f)-1$ neighbours in $G$ in light cubes. Since $l \geq 2s_0(f) - 3$, we have $R(T, S_0) \geq \frac 1 2$.

Pick a vertex $v \in S'$. Let $S_i, S_j$ be light cubes with $v_i \in G$ and $v_j \notin G$. If $s_0(f) \geq 4$, then by $l \geq 2s_0(f) - 3$ we have that the number of choices for $j$ is at least $(2s_0(f)-3) - (s_0(f)-1) = 2$; if $s_0(f) = 3$, then since $l \geq 4$, this number is also at least 2. Since $v_i$ has full sensitivity, $v_{i,j} \in G$, and there are $s_0(f)-1$ choices for $i$. On the other hand, as $S_j$ is a light cube, $v_j$ is sensitive to a neighbour in $S_j$. Hence $v_j$ has full sensitivity, so its neighbour $v_{j,h} \notin G$. But then $v_h$ has at least 3 neighbours not in $G$, and, as $s_1(f) = 2$, we have $v_h \notin G$.

This shows that for a vertex $v \in S'$, its neighbour in $S_h$ does not belong to $G$. Let $S_h'$ be the set of $S'$ neighbours in $S_h$. Then $R(S_h \setminus G, S_h) \geq R(S_h', S_h) = R(S', S_0) \geq \frac 1 2$: a contradiction, since $S_h$ is a heavy cube.
\end{proof}

\section{Proof of Theorem \ref{thm:s0-5}}

\begin{lemma} \label{thm:fully-sensitive}
 Let $G$ be a non-empty subgraph of $Q_n$ induced on the vertex set $\{x \mid f(x) = 1\}$ of a function $f$ with $s_1(f) \leq 1$. Then either $G = Q_n$, or $G$ is an $(n-1)$-dimensional subcube, or the number of fully sensitive vertices in $G$ is at least $2 |Q_n \setminus G|$. Furthermore, in the last case each vertex in $Q_n \setminus G$ has a sensitivity of at least 2.
\end{lemma}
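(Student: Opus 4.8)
The plan is to apply Theorem~\ref{thm:simon-gen} to $G$ itself. Since $f$ has $s_1(f) \leq 1$, every vertex of $G$ has at most one sensitive neighbour, so $s(G, Q_n) \leq 1$. By Theorem~\ref{thm:simon-gen}, either $R(G, Q_n) = \frac 1 2$ and $G$ is an $(n-1)$-dimensional subcube — one of the first two listed outcomes — or $R(G, Q_n) \geq \frac 3 4$, equivalently $|Q_n \setminus G| \leq \frac 1 4 |Q_n|$. If $|Q_n \setminus G| = 0$ then $G = Q_n$, also a listed outcome. So the remaining case is $0 < |Q_n \setminus G| \leq \frac 1 4 |Q_n|$, and here I must produce many fully sensitive vertices in $G$ and show that every vertex outside $G$ has sensitivity at least $2$.

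First I would handle the sensitivity claim for vertices outside $G$. Let $x \notin G$. Its sensitivity $s(G, Q_n, x)$ counts neighbours $x^{\{i\}}$ that lie in $G$. If $x$ had at most one such neighbour, then consider the set $G' = G \cup \{x\}$; a short check shows $s(G', Q_n) \leq 1$ as well (the only new vertex $x$ has at most one neighbour in $G \subseteq G'$, and each old vertex of $G$ gains at most the neighbour $x$, but it already had a bound of $1$ — here one must be slightly careful, so instead I would argue via the complement). The cleaner route: apply Theorem~\ref{thm:simon-gen} to the complement graph $\overline G = Q_n \setminus G$. A vertex $x \in \overline G$ is sensitive (within $\overline G$) to exactly those neighbours lying in $G$; so $s(G, Q_n, x)$ for $x \notin G$ is precisely the sensitivity of $x$ inside $\overline G$. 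If some $x \in \overline G$ had sensitivity $\leq 1$ inside $\overline G$, then $s(\overline G, Q_n) \leq 1$, and since $\overline G \neq \varnothing$ and $\overline G \neq Q_n$ (as $G \neq \varnothing$), Theorem~\ref{thm:simon-gen} would force $|\overline G| \geq \frac 3 4 |Q_n|$ or $|\overline G| = \frac 1 2 |Q_n|$ — either way $|\overline G| \geq \frac 1 2 |Q_n|$, contradicting $|\overline G| \leq \frac 1 4 |Q_n|$. Hence every $x \notin G$ has sensitivity at least $2$, proving the ``furthermore'' clause.

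Next, the count of fully sensitive vertices of $G$. Here I would count edges of the cut between $G$ and $\overline G$. On the one hand, each vertex $x \in \overline G$ contributes $s(G,Q_n,x) \geq 2$ cut edges (just shown), so the number of cut edges is at least $2|\overline G|$. On the other hand, a non-fully-sensitive vertex of $G$ contributes $0$ cut edges if its sensitivity is $0$ — wait, it could contribute up to $n-1$ cut edges — so a direct edge count does not immediately bound the number of fully sensitive vertices. Instead, since $s_1(f) \leq 1$, \emph{every} vertex of $G$ contributes at most $1$ cut edge, so the total number of cut edges is at most $|G|$; combined with the lower bound $2|\overline G|$, this already gives $|G| \geq 2|\overline G|$, but I want the fully sensitive vertices specifically. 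Because each vertex of $G$ has sensitivity $0$ or $1$, "fully sensitive" for a vertex of $G$ means sensitivity exactly $1$ (unless $n = 0$, excluded), i.e. it contributes exactly one cut edge; a sensitivity-$0$ vertex contributes none. So the number of fully sensitive vertices of $G$ equals the number of cut edges, which is at least $2|\overline G|$. That is exactly the claim.

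The main obstacle is the bookkeeping in the middle step: one must be careful that "fully sensitive" is interpreted correctly for a subgraph where $s_1(f) \leq 1$ forces every vertex of $G$ to have sensitivity at most $1$, so that full sensitivity coincides with being sensitive at all, and then the identity "number of fully sensitive vertices of $G$ = size of the edge cut $\geq 2|\overline G|$" falls out cleanly. The only genuinely delicate point is justifying sensitivity $\geq 2$ for vertices outside $G$ via the complement, where one needs that $\overline G$ is neither empty nor everything and that the two non-trivial conclusions of Theorem~\ref{thm:simon-gen} both contradict $|\overline G| \leq \frac 1 4 |Q_n|$.
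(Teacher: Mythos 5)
Your reduction via Theorem~\ref{thm:simon-gen} applied to $G$ itself is fine: it correctly isolates the remaining case $0<|Q_n\setminus G|\le\frac14|Q_n|$, and your closing edge-count (every vertex of $G$ has sensitivity $0$ or $1$, so the number of fully sensitive vertices of $G$ equals the size of the cut, which is at least $2|Q_n\setminus G|$ once each outside vertex has at least two neighbours in $G$) is a clean way to finish. But the pivotal step --- that every $x\notin G$ has at least two neighbours in $G$ --- is not established. You argue: ``if some $x\in\overline G$ had sensitivity $\le 1$ inside $\overline G$, then $s(\overline G,Q_n)\le 1$.'' That inference is backwards. By the paper's definition, $s(\overline G,Q_n)$ is the \emph{maximum} of $s(\overline G,Q_n,x)$ over $x\in\overline G$; a single vertex of low sensitivity gives no upper bound on this maximum, so Theorem~\ref{thm:simon-gen} cannot be invoked for $\overline G$. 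Indeed there is no a priori bound on $s(\overline G,Q_n)$ at all: a vertex of $\overline G$ may have up to $n$ neighbours in $G$ without violating $s_1(f)\le 1$. Your earlier attempt via $G\cup\{x\}$ fails for the reason you suspected (the added vertex $x$ itself has sensitivity about $n-1$ in the enlarged graph), so the ``furthermore'' clause, and with it the whole count, is left unproven.

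What is actually needed here is a structural fact that your argument skips: since $s_1(f)\le 1$, each connected component of $Q_n\setminus G$ must be a subcube, and distinct components must be at Hamming distance at least $3$ (otherwise some vertex of $G$ would be sensitive in two directions). Given that, the trichotomy falls out: no components gives $G=Q_n$; a component of co-dimension $1$ must be the only one and $G$ is the opposite half-cube; otherwise every component has co-dimension at least $2$, so each of its vertices has at least two neighbours outside its component, and those neighbours lie in $G$ because other components are far away. This yields the degree bound of $2$ directly, after which your cut-edge count (or equivalently the paper's observation that the two fully sensitive neighbours are distinct for distinct outside vertices) completes the proof. Without some argument of this kind, the proposal has a genuine gap at its central claim.
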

\begin{proof}
We examine the induced graph $Q_n \setminus G$. Each connected component in this graph must be a subcube, otherwise some of the vertices of $G$ in the smallest subcube containing the component would have sensitivity at least 2. Furthermore, the Hamming distance between any two of these subcubes is at least 3, otherwise the vertices between them would have sensitivity at least 2.

If there are no such subcubes, $G=Q_n$. If there is such an $(n-1)$-subcube, it must be the only one and $G$ is its opposite $(n-1)$-subcube. Otherwise each of these subcubes is an $(n-2)$-subcube or smaller. Therefore each vertex in them has at least 2 neighbours in $G$. Since $s_1(f)\leq1$, these 2 neighbours are fully sensitive and must be different for each such vertex, thus there are at least  $2 |Q_n \setminus G|$ of them.
\end{proof}

We denote a subcube that can be obtained by fixing some continuous sequence $b$ of starting bits by $Q_b$. For example, $Q_0$ and $Q_1$ can be obtained by fixing the first bit and $Q_{01}$ can be obtained by fixing the first two bits to 01. We use a wildcard * symbol to indicate that the bit in the corresponding position is not fixed. For example, by $Q_{*10}$ we denote a cube obtained by fixing the second and the third bit to 10.

\begin{proof}[Proof of Theorem \ref{thm:s0-5}]

Assume on the contrary that such a function exists. Denote $m=C_0(f)= 2 s_0(f)-2$ and $k=s_0(f) \geq 5$.
 
We work with a graph $G$ induced on a vertex set $\{x \mid f(x) = 1\}$. W.l.o.g. let $z$ be a vertex such that $f(z) = 0$ and $C(f,z) = C_0(f)$. Pick a 0-certificate $S_0$ of length $m=C_0(f)$ and $z \in S_0$. It has $m$ neighbour subcubes which we denote by $S_1$, $S_2$, $\ldots$, $S_{m}$. Let $n'=n-m$, the number of dimensions in each of $S_i$.

Each vertex of $G$ in any $S_i$ gives sensitivity 1 to some vertex in $S_0$, therefore $\sum_{i=1}^m R(G, S_i) \leq s_0(f) =k$.

Let $G_i = G \cap S_i$. Each $G_i$ is nonempty, otherwise we would obtain a shorter certificate for $z$.

Since any 1-vertex in each of $S_i$ is sensitive to $S_0$, we have $s(G, S_i) \leq s_1(f)-1 = 1$. Thus by Theorem \ref{thm:simon-gen} $G_i$ can be either an $(n'-1)$-subcube of $S_i$ with $R(G_i, S_i) = \frac{1}{2}$ or $R(G_i, S_i) \geq \frac{3}{4}$. We will call these cubes \emph{light} or \emph{heavy}, respectively.

Let $F_i$ be the set of fully sensitive vertices in $G_i$.  Note that $F_i = G_i$ for light cubes and $|F_i|= 2 |S_i \setminus G_i|$ for heavy cubes by Lemma \ref{thm:fully-sensitive}.

By $N_{i,j}$ denote the common neighbour cube of $S_i$, $S_j$ that is not $S_0$. For a vertex $v \in S_0$, denote by $v_i$ its neighbour in $S_i$. For $v_i, v_j, i \neq j$, by $v_{i,j}$ denote their common neighbour in $N_{i,j}$.

\bigskip

We will first show that no vertex in $S_0$ has $k$ fully sensitive neighbours. Assume that there exists such a vertex $v$. W.l.o.g. assume that its $k$ fully sensitive neighbours are in $S_1, \ldots, S_k$. Then examine $v_m$. As $v$ is already fully sensitve, $v_m \notin G$. As $v_i \in G$ and fully sensitive for each $i \in [k]$, we have that each $v_{i,m}$ is also in $G$.

But then it follows by induction that no vertex in $S_m$ can be in $G$. As a basis, we have that $v_m$ is also fully sensitive, therefore all of its neighbours in $S_m$ must also not be in $G$.  Now assume that all vertices in $S_m$ with a distance to $v_m$ no more that $i \geq 1$ are not in $G$, then examine a vertex $u_m$ at distance $i+1$. $u_m$ differs from $v_m$ in at least $i+1 \geq 2$ bits, therefore it has at least 2 neighbours closer to $v_m$ which are in $S_m \setminus G$. But we have that $G_m$ has sensitivity at most $s_1(f)-1 = 1$ inside $S_m$, therefore $u_m \notin G$.

But then $G_m = \varnothing$, a contradiction.

\bigskip

Since $\sum_{i=1}^m R(G, S_i) \leq k$, we have that there are at most 4 heavy cubes. We will now examine each possible number of heavy cubes separately.

\bigskip

First we examine the case where there are $m-4$ light cubes and 4 heavy cubes. Since $\sum_{i=1}^m R(G, S_i) \leq k$, for each heavy cube $S_i$ it holds that $R(G_i : S_i) = \frac{3}{4}$ exactly. 

Then by Lemma \ref{thm:fully-sensitive}, $R(F_i, S_i) \geq \frac{1}{2}$ for all $i$. Then  $\sum_{i=1}^m R(F_i, S_i) \geq k-1$. Since no vertex in $S_0$ can have $k$ fully sensitive neighbours, we have that each vertex in $S_0$ has exactly $k-1$ fully sensitive neighbours.

Now examine a vertex $v \in S_0$ which has at least $k-3$ neighbours in $G$ in light cubes, such a vertex must exist as there are a total of $2 k -6$ light cubes in this case. W.l.o.g. assume that its $k-1$ fully sensitive neighbours are in $S_1, \ldots, S_{k-1}$. As $v$ already has $k-3$ neighbours in $G$ in light cubes and can have at most $k$ neighbours in $G$, it must have a neighbour not in $G$ in a heavy cube, W.l.o.g. let that heavy cube be $S_m$. Then examine $v_m \notin G$. As $v_i \in G$ and fully sensitive for each $i \in [k-1]$, we have that each $v_{i,m}$ is also in $G$. In addition, $v_m$ has at least 2 neighbours in $G$ in $S_m$ by Lemma \ref{thm:fully-sensitive}. But then it has sensitivity at least $k+1$, a contradiction.

\bigskip

Now we examine the case where there are no heavy cubes. $R(F_i, S_i) = \frac{1}{2}$ for all $i$ and $\sum_{i=1}^m R(F_i, S_i) = k-1$. Since no vertex in $S_0$ can have $k$ fully sensitive neighbours, we have that each vertex in $S_0$ has exactly $k-1$ fully sensitive neighbours in $G$.

If no two of $G_i$ were opposite subcubes in $Q_{n'}$, they would all overlap  in $Q_{n'}$, giving a vertex in $S_0$ with sensitivity $m > k$. Therefore at least 2 of them are opposite, W.l.o.g. let $G_1= Q_0 \cap S_1$ and $G_2=Q_1 \cap S_2$. 

Now W.l.o.g. examine the case where $z \in Q_0$. Then we examine the 0-certificate $S_0'= Q_0 \cap (S_0 \cup S_2)$. It is also a minimal certificate, but has a heavy neighbour $Q_0 \cap (S_1 \cup N_{1,2})$, therefore we have reduced this case to  the cases where at least one of $S_i$ is heavy.

\bigskip

We now examine the case where there are $m-1$ light cubes and 1 heavy cube, let them be $S_1, \ldots, S_{m-1}$ and $S_m$ respectively. Note that $m-1 \geq k+1$, as $k \geq 5$. Then $\sum_{i=1}^{m-1} R(F_i, S_i) = \frac {1}{2} (2k-3)=k- \frac{3}{2}$. Since no vertex in $S_0$ can have $k$ fully sensitive neighbours, we have that half of all vertices in $S_0$ have exactly $k-1$ fully sensitive neighbours in $G$  in light cubes.

We now examine one such vertex $v \in S_0$. W.l.o.g. assume that its $k-1$ fully sensitive neighbours are in $S_1, \ldots, S_{k-1}$. Then for all $i \in [k-1]$ we have that $v_{i,k}, v_{i,k+1}, v_{i,m}$ are all in $G$.  Since $S_k$ and $S_{k+1}$ are light, we have that $v_k$ and $v_{k+1}$ are fully sensitive. Therefore $v_{k,m}$ and $v_{k+1,m}$ are not in $G$. Therefore $v_m$ has at least 3 neighbours not in $G$ and is also not in G. But then $v_m$ has at least 2 additional neighbours in $G$ by Lemma \ref{thm:fully-sensitive}. Therefore it has sensitivity at least $k+1$, a contradiction.

\bigskip

We now examine the case where there are $m-2$ light cubes and 2 heavy cubes, let them be $S_1, \ldots, S_{m-2}$ and $S_{m-1}, S_{m}$ respectively.  Then $\sum_{i=1}^{m-2} R(F_i, S_i) = k-2$.  If there exists a vertex $v \in S_0$ with $k-1$ neighbours in $G$ in light cubes, we can derive a contradiction similarly to the case with 1 heavy cube (since $m-2 \geq k+1$ as well). Therefore no such vertex exists and each vertex in $S_0$ has exactly $k-2$ neighbours in $G$ in light cubes.

We will next show that $G_{m-1}=S_{m-1}$ and $G_m=S_m$. Assume otherwise, that there exists a vertex in $S_{m-1}, S_m$ not in $G$.  W.l.o.g. let $v_m \notin G$. By Lemma \ref{thm:fully-sensitive}, it has a fully sensitive neighbour $u_m \in G_m$. 
We have that exactly $k-2$ of $u_1, \ldots, u_{m-2}$ are in G, W.l.o.g. let them be $u_1, \ldots , u_{k-2}$. Now examine a vertex $u_i \notin G, i \in [k-1,m-2]$. There are $k-2$ such $u_i$. 
We have that $u_{j,i}$ belongs to $G$ for all $j \in [k-2]$. Additionally so does $u_{i,m}$. Finally, $u_i$ has sensitivity 1 in $S_i$. Then it is fully sensitive, therefore $u_{i,m-1} \notin G$. 
Then $u_{m-1}$ has at least $k-1$ neighbours not in $G$ and it is also not in $G$. But  $u_{j,m-1}$ belongs to $G$ for all $j \in [k-2]$ and so does $u_{m-1,m}$. By Lemma \ref{thm:fully-sensitive}, $u_{m-1}$ also has sensitivity 2 in $S_{m-1}$, giving it a total sensitivity of $k+1$, a contradiction.

If no two light cubes were opposite subcubes in $Q_{n'}$, they would all overlap  in $Q_{n'}$, giving a vertex in $S_0$ with $m-2$ neighbours in $G$ in light cubes. Therefore at least 2 of them are opposite, W.l.o.g. let them be $G_1$ and $G_2$. Then we have that each vertex in $S_0$ has exactly $k-3$ neighbours in $G$ in $G_3, \ldots, G_{m-2}$ and again we have that at least 2 of them must be opposite, W.l.o.g. let them be $G_3, G_4$. Similarly we obtain that $G_{2i-1}, G_{2i}$ are opposite for all $i\in[\frac{m-2}{2}]$. Now we examine 2 cases:
\begin{enumerate}
\item All pairs of light cubes consist of the same subcubes. W.l.o.g. we have that $G_{2i-1} = Q_0$ and $G_{2i} = Q_1$ for all $i\in[\frac{m-2}{2}]$. Now w.l.o.g. examine the case where $z \in Q_0$. Then we examine the 0-certificate $S_0'= Q_0 \cap (S_0 \cup S_2)$. It is also a minimal certificate, but has heavy neighbours $Q_0 \cap (S_{2i-1} \cup N_{2i-1,2})$ for all $i\in[\frac{m-2}{2}]$. Therefore we have reduced this case to the cases where at least 3 of $S_i$ are heavy.

\item At least 2 of the pairs of light cubes consist of different subcubes. W.l.o.g. we have that $G_1=Q_0, G_2=Q_1, G_3=Q_{*0}, G_4=Q_{*1}$. W.l.o.g. assume that $z \in Q_{00}$. 

Then we examine the 0-certificate $S_0'= Q_0 \cap (S_0 \cup S_2)$. It is also a minimal certificate, and has a heavy neighbour $Q_0 \cap (S_1 \cup N_{1,2})$. If it has 2 more heavy neighbours we have reduced this case to the cases where at least 3 of $S_i$ are heavy. Otherwise at least one of $Q_0 \cap (S_{m-1} \cup N_{2,m-1})$, $Q_0 \cap (S_m \cup N_{2,m})$ is light. W.l.o.g. assume it is $Q_0 \cap (S_{m-1} \cup N_{2,m-1})$. Since $G_{m-1}=S_{m-1}$, this means that $Q_0 \cap N_{2,m-1}$ does not have any vertices in $G$. Then $Q_0 \cap S_{m-1}$ is fully sensitive.

Now examine the 0-certificate $S_0'' =  Q_{*0} \cap (S_0 \cup S_4)$. It is also a minimal certificate, and has a heavy neighbour $Q_{*0} \cap (S_3 \cup N_{3,4})$. If it has 2 more heavy neighbours we have reduced this case to the cases where at least 3 of $S_i$ are heavy. Otherwise at least one of $Q_{*0} \cap (S_{m-1} \cup N_{4,m-1})$, $Q_{*0} \cap (S_m \cup N_{4,m})$ is light. As $Q_0 \cap S_{m-1}$ is fully sensitive,  we have that $Q_0 \cap N_{4,m-1} \subset G$ and it must be the case that $Q_{*0} \cap (S_m \cup N_{4,m})$ is light. Since $G_m=S_m$, this means that $Q_{*0} \cap N_{4,m}$ does not have any vertices in $G$. Then $Q_{*0} \cap S_m$ is fully sensitive.

Now note that, as $G_5$ and $G_6$ are opposite subcubes, one of $z_5$, $z_6$ must not belong to $G$.
W.l.o.g. let it be $z_5$. As each vertex in $S_0$ has exactly $k-2$ neighbours in $G$ in light cubes, $z_5$ has at least $k-2$ neighbours in $G$ in $N_{5,j}$ for $j \in [m-2]$. 
Since $Q_0 \cap S_{m-1}$ and $Q_{*0} \cap S_m$ are fully sensitive, $z_{5,m-1}$ and $z_{5,m}$ are also in $G$. 
As $z_5$ also has a neighbour in $G$ in $S_5$, it has a sensitivity of at least $k+1$, a contradiction.
\end{enumerate}

\bigskip

We are left with the case where there are $m-3$ light cubes and 3 heavy cubes, let them be $S_1, \ldots, S_{m-3}$ and $S_{m-2}, S_{m-1}, S_m$ respectively.

We will first show that no vertex in $S_0$ has $k-1$ neighbours in $G$  in light cubes. Assume on the contrary that such a vertex $v \in S_0$ exists. W.l.o.g. assume that its $k-1$ fully sensitive neighbours are in $S_1, \ldots, S_{k-1}$. Then for all $i \in [k-1], j \in [m-2,m]$ we have that $v_{i,j} \in G$.  Then, if $v_j \notin G$, it would have sensitivity $k+1$ ($k-1$ in $v_{i,j}$, 2 in $S_j$ by Lemma \ref{thm:fully-sensitive}). But if $v_j \in G$ for all $j \in [m-2,m]$, $v$ would have sensitivity $k+2$. Therefore no such $v$ exists.

Now note that $\sum_{i=1}^{m-3} R(F_i, S_i) =k - \frac{5}{2}$. Therefore at least half the vertices of $S_0$ have $k-2$ neighbours in $G$ in light cubes. Examine one such vertex $v$. W.l.o.g. assume that $v_1,\ldots, v_{k-2}$ are in $G$. We will now show that at most one of $v_{m-2}, v_{m-1}, v_m$ is in $G$. Assume that on the contrary 2 of them are (if all 3 were, the sensitivity of $v$ would be $k+1$). W.l.o.g. let $v_{m-2} \notin G$, $v_{m-1}, v_m \in G$. As $v_{i,m-2} \in G$ for $i \in [k-2]$ and $v_{m-2}$ has sensitivity 2 in $S_{m-2}$ by Lemma \ref{thm:fully-sensitive}, we have that $v_{m-2}$ is fully sensitive. Therefore $v_{m-2,m-1}$ and $v_{m-2,m}$ are both not in $G$. Then $v_{m-1}$ and $v_m$ are both fully sensitive and $v_{k-1,m-1}, v_{k-1,m} \in G$. But  $v_{i,k-1} \in G$ for $i \in [k-2]$  and $v_{k-1}$ has sensitivity 1 in $S_{k-1}$, giving it a total sensitivity of $k+1$, a contradiction.

As $R(G, S_i) \geq \frac {3}{4}$ for $i \in [m-2,m]$,  we have that $\sum_{i=m-2}^m R(G, S_i) \geq \frac{9}{4}$. However, by the previous paragraph, $\frac{1}{2}$ of the vertices in $S_0$ have at most one neighbour in a heavy cube. Denote by $G'$ the vertices in $G$ neighbouring the other half of $S_0$. We now have that $\sum_{i=m-2}^m R(G', S_i) \geq \frac{7}{4}$. But then by the pigeonhole principle at least one vertex in $S_0$ would need to have 4 neighbours in $G$ in heavy cubes, which is impossible with only 3 heavy cubes.

\end{proof}

\end{document}